\newcommand{\removelatexerror}{\let\@latex@error\@gobble}
\newcommand{\todoH}[1]{\todo[linecolor=yellow!70!black, backgroundcolor=yellow!10]{H: #1}}
\newcommand{\todoS}[1]{\todo[linecolor=blue!70!black, backgroundcolor=blue!10]{S: #1}}
\newcommand{\bigO}{\ensuremath{\operatorname{O}}}%
\theoremstyle{plain}
\newtheorem{theorem}{Theorem}[section]
\newtheorem{lemma}[theorem]{Lemma}
\newtheorem{observation}[theorem]{Observation}
\newtheorem{claim}[theorem]{Claim}
\theoremstyle{definition}
\newtheorem{definition}[theorem]{Definition}
\newcommand{\prob}[6]{%
 \begin{center}%
   \begin{minipage}{0.9\linewidth}%
     \begin{itemize}[d]
       \item[\textsc{#1}]
       \item[\textbf{#2}]  #3%
       \item[\textbf{#4:}]  #5
     \end{itemize}
    \end{minipage}%
   \end{center}
}
\newcommand{\probdef}[3]{\prob{#1}{Instance:}{#2}{Question:}{#3}{as}}
\newcommand{\cost}{\ensuremath{\operatorname{cost}}}
\newcommand{\pEHG}{\textsc{Exact Game Implementation}}
\newcommand{\pHG}{\textsc{Game Implementation}}
\newcommand{\pXTC}{\textsc{Exact Cover by 3-Sets}}
\newcommand{\pTC}{\textsc{3-Coloring}}
\newcommand{\equib}{equitable}
\newcommand{\ccc}{\mathcal{C}}
\newcommand{\aaa}{\mathcal{A}}
\newcommand{\sss}{\mathcal{S}}
\newcommand{\vvv}{\mathcal{V}}
\newcommand{\uuu}{\mathcal{U}}
\newcommand{\xxx}{\mathcal{X}}
\newcommand{\ooo}{\mathcal{O}}
\newcommand{\fff}{\mathcal{F}}
\newcommand{\enn}{\ensuremath{\hat{n}}}
\DeclareMathOperator{\ngb}{N}
\newif\ifshort
\newcommand{\appsymb}{\ensuremath{\star}}
\newcommand{\toappendix}[1]{%
  \gappto{\appendixtext}{
    {#1}
   }
}
\newcommand{\appendixproof}[2]{%
  \gappto{\appendixtext}{
    \subsection{Proof of \cref{#1}}\label{proof:#1}
    #2
    }
}
\newcommand{\appendixsection}[1]{%
  \gappto{\appendixtext}{
    \section{Additional Material for Section~\ref{#1}}
    \label{appsec:#1}
  }
}
\newcommand{\toappendix}[1]{#1}
\newcommand{\appendixproof}[2]{#2}
\newcommand{\appendixsection}[1]{}
\title{Game Implementation: What Are the Obstructions?}
\author {
    Jiehua Chen,
    Sebastian Vincent Haydn,
    Negar Layegh Khavidaki,
    Sofia Simola,
    Manuel Sorge
}
\begin{document}

\maketitle
\begin{abstract}
  In many applications, we want to influence the decisions of independent agents by designing incentives for their actions.
  We revisit a fundamental problem in this area, called \pHG: Given a game in standard form and a set of desired strategies, can we design a set of payment promises such that if the players take the payment promises into account, then all undominated strategies are desired?
  Furthermore, we aim to minimize the cost, that is, the worst-case amount of payments.

  We study the tractability of computing such payment promises and determine %
  more closely what obstructions we may have to overcome in doing so.
  We show that \pHG\ is NP-hard even for two players, solving in particular a long open question (Eidenbenz et al.\ 2011) and suggesting more restrictions are necessary to obtain tractability results.
  We thus study the regime in which players have only a small constant number of strategies and obtain the following.
  First, this case remains NP-hard even if each player's utility depends only on three others.
  Second, we repair a flawed efficient algorithm for the case of both small number of strategies and small number of players.
  Among further results, we characterize sets of desired strategies that can be implemented at zero cost as a kind of stable core of the game.
\end{abstract}

\section{Introduction}

Nudge theory~\cite{thaler_nudge_2008}, gamification~\cite{hamari_gamification_2019}, and the design of blockchain systems~\cite{ButerinRLP20} are just a few areas in which we apply incentives in order to coax agents towards behaving in a desirable way.
In these general settings, agents select strategies on their own volition, but we may add incentives (or incur penalties) that increase (resp.\ decrease) the salience or utility of particular strategies in situations of our choice.
The goal is to implement a desired set of strategies or strategy profiles, that is, to ensure that undesired strategies entail smaller utility than desired ones.

With the advent of blockchain systems, we feel that this topic has gained renewed relevance.
First, the design of a blockchain system itself, such as Bitcoin or Ethereum, involves the design of a protocol that rewards intended behavior (e.g., validating transactions by mining blocks for block rewards in Bitcoin) or penalizes unintended behavior (e.g., by slashing the stake of validators that deviate from a consensus in the recent upgrade of Ethereum).
The latter is a form of enforcing the existence of a Schelling point via incentives.
Second, there are now base-layer systems like Ethereum in place that allow world-wide consistent general-purpose computations and thus the straightforward creation of new moneys (called tokens) that can be made to behave in new ways: generated, burned, exchanged, locked, etc.
Thus an immense design space for incentive-based protocols was opened up and we witness its continued exploration.
For just a few examples consider stablecoins, that is, tokens that use incentive-based mechanisms to try and reflect the value of some underlying security (Maker DAI, Terra USD, FRAX Shares, and many more)\footnote{See \url{https://makerdao.com/en/whitepaper/}, \url{https://terra.money/Terra_White_paper.pdf}, and \url{https://docs.frax.finance/overview}.} or incentive-based consensus mechanisms for adjudication, moderation, and transferring real-world information onto blockchains (such as Kleros, UMA, Chainlink oracles, and again many more)\footnote{See \url{https://kleros.gitbook.io/docs/}, \url{https://docs.umaproject.org/}, and \url{https://chain.link/whitepaper}.}.

In all of the above design problems, there are independent actors that we want to incentivize to behave in a certain desired way.
A fundamental underlying problem herein is \pHG\ \cite{monderer_kimplementation_2004}, stated as follows:
We are given a game in standard form (a set of players, strategies for each player, and their utility) and for each player a set of desired strategies.
We want to specify a set of payment promises that define for each strategy profile (a tuple specifying one strategy per player) a payment promise to each player.
These payment promises shall \emph{implement} our desired sets of strategies, that is, when taking the payments into account, no player wants to play an undesired strategy.
In technical terms, each strategy that is not dominated by any other strategy is desired (see \cref{sec:prelims} for the formal definitions)%
.\footnote{We focus here only on pure strategies. Furthermore, \pHG\ indeed is to implement a set of strategy profiles rather than a set of strategies for each player. Implementing sets of strategies corresponds to implementing so-called rectangular strategy profiles, see the formal definitions in \cref{sec:prelims}.}
Furthermore, we want to minimize the \emph{cost} of the implementation, that is, the amount paid in the worst case.
More precisely, we want to minimize, over all strategy profiles that consist of undominated strategies, the sum of payment promises to all players.
In this work, we explore the question ``How difficult is it to implement a desired set of strategies?''

\paragraph{Contribution.}
We obtain the following results.
We first show that \pHG\ is NP-hard, even if there are only two players and even if our budget for the cost is 0 (\cref{thm:2p}).
This strengthens two results by \citet{deng_complexity_2016} who showed that \pHG\ is NP-hard for six players, and that \pHG\ is NP-hard for two players and mixed strategies, both with positive budgets.\footnote{\citet{monderer_kimplementation_2004} claimed NP-hardness of \pHG, but the proof was erroneous~\cite{eidenbenz_cost_2011}.}
We note that hardness for mixed-strategies or positive budgets is less surprising because there are a priori more possibilities for encoding combinatorial structure into the solutions.
Instead, our reduction shows that the difficulty lies already and mainly in selecting, for each undesired strategy~$x$, a desired strategy that dominates~$x$.

We then study a variant of \pHG\ that was supposedly more tractable \cite{monderer_kimplementation_2004}, called \pEHG:
In addition to requiring undominated strategies to be desired, we require that all desired strategies are undominated.
We show that also this a priori simpler-looking problem is NP-hard even for two players (\cref{thm:EHG-nphard}); this answers an open question by \citet{eidenbenz_cost_2011}.
Indeed \citet{monderer_kimplementation_2004} gave a polynomial-time algorithm for \pEHG\ which was shown to produce suboptimal results by \citet{eidenbenz_cost_2011}.

The above hardness results do not apply in scenarios in which players have only a small constant number of strategies to choose from.
We hence consider this regime next.
If both the number of players and the number of strategies are small constants, then the only part of the input that may be of unbounded size are the quantities specified in the utility functions.
\citet{eidenbenz_cost_2011} showed that in this case \pEHG\ can be solved efficiently; however, as we observe here there is a flaw in the algorithm.
We simplify the algorithm and repair the flaw for a large though not universal class of games, providing the first nontrivial algorithm for implementing strategies that is formally proven to be correct (\cref{thm:eiden_fix}).

\looseness=-1
As we increase the number of players, the size of the input (the number of utility values we have to specify) scales exponentially in the number of players (and strategies).
A common way to deal with this explosion is to instead consider the relevant special case of graphical games \cite{KearnsLS01}, where the players are situated in a graph and the utility of a player depends only on its neighbors.
We hence study this case next, that is, \pHG\ on graphical games with small constant number of strategies per player.
We show that even the case where each player's utility depends only on three others and each player has only two strategies remains NP-hard~(\cref{thm:dg3_2s}).
As the case where each player has only one strategy is trivial, a promising future direction is to consider the case where each player depends only on two others or tree-structured games.

Finally, before discovering the NP-hardness of \pHG\ we believed that zero-cost implementation could be solved efficiently.
As a tool towards this we characterized strategy sets that can be implemented at cost~0 as a form of stable core of the game or, alternatively, as a form of generalized Nash equilibrium.
We believe that this characterization is of independent interest, in particular because it generalizes the result of \citet{monderer_kimplementation_2004} that states that Nash equilibria can be implemented at cost~0.
Moreover, it captures a fundamental property of self-enforcing sets of strategies, such as morality, which we are not aware of having been formally defined before.

\looseness=-1
\paragraph{Further related work.}
Implementation theory~\cite{maskin_nash_1999,maskin_implementation_2002} generally studies the implementation of social-choice rules with incentives and it is impossible to give an overview over the large body of work here.
\citet{conitzer_complexity_2000} studied the complexity of implementing social-choice rules.
The main difference to \pHG\ is that the payment promises to the players that we may choose from are restricted and given in the input.
Such restrictions give significantly more leeway for designing hardness reductions.
\citet{brill_computing_2015} considered a problem related to \pHG\ in which some of the utility values are missing and we are to complete the missing values, possibly with negative ones.
The goal is to ensure that strategies participating in some Nash equilibrium are desired.
In \pHG\ we have much more freedom in designing our solutions.
\citet{wooldridge_incentive_2013} studied implementation questions for Boolean games, that is, where the strategies of the players correspond to a selection of truth values of some variables intrinsic to the game.
They aimed at implementing Boolean formulas on the variables in some or all Nash equilibria.
Because of the relation to Boolean satisfiability, implementation for Boolean games is situated higher in the polynomial hierarchy.
Zero-cost implementation has been studied for routing games by \citet{MoscibrodaS09}.
They gave bounds on the difference between anarchistic equilibria and those achievable by zero-cost implementation.
Finally, \citet{letchford_computing_2010,DengC17,DengC18} considered the complexity of committing to certain behaviors as a way for one player to change the outcome of a game to his favor.

\section{Preliminaries}\label{sec:prelims}
\appendixsection{sec:prelims}

\ifshort(Full) proofs for results marked by \appsymb\ are deferred to the full version \cite{CKHS2022implementgames}.\fi
Throughout, for $t \in \mathbb{N}$ we use $[t]$ to denote the set~$\{1, 2, \ldots, t\}$.

\looseness=-1
A \emph{game} $G$ is a tuple $(N, \mathcal{X}, \mathcal{U})$ where $N$ is the set of \emph{players}; usually $N = [n]$.
We specify for each player $i \in N$ a set $X_i$ of \emph{strategies} available to~$i$.
Then the set $\mathcal{X}$ equals $X_1 \times X_2 \times \ldots \times X_n$.
We call elements of $\mathcal{X}$ \emph{strategy profiles}.
Finally, $\mathcal{U} = \{U_1, U_2, \ldots, U_n\}$, where each $U_i$ is a function $\mathcal{X} \to \mathbb{R}$, called \emph{utility function} for player~$i$.

As a notational shorthand, for any $i \in N$ we use $\mathcal{X}_{-i}$ to denote $X_1 \times X_2 \times \ldots \times X_{i - 1} \times X_{i + 1} \times \ldots \times X_n$.
We sometimes write the value of a utility function $U_i$ such that the strategy played by player~$i$ comes first in the argument of $U_i$ and the remaining strategies second.
That is, if some strategy profile $x \in \mathcal{X}$ consists of strategy $x_i \in X_i$ of player $i$ and a tuple $x_{-i} \in X_{-i}$ of remaining strategies, we write $U_i(x_i, x_{-i})$ for $U(x)$.
However, if there are only two players, then the first argument~$x$ of $U_i(x, y)$ always refers to the strategy of player~$1$ and the second argument~$y$ always refers to the strategy of player~$2$.

Let $x, y \in X_i$ be two strategies of player~$i$.
We say that \emph{$x$ dominates $y$} if for each $x_{-i} \in \mathcal{X}_{-i}$ we have $U_i(x, x_{-i}) \geq U_i(y, x_{-i})$ and there exists $x_{-i} \in \mathcal{X}_{-i}$ such that $U_i(x, x_{-i}) > U_i(y, x_{-i})$.\footnote{This notion of domination is commonly referred to as weak domination. Alternative notions of domination are also studied, such as strict domination in which we require $U_i(x, x_{-i}) > U_i(y, x_{-i})$ for all $x_{-i} \in \mathcal{X}_{-i}$. In keeping with the literature on implementation~\cite{eidenbenz_cost_2011,deng_complexity_2016} we focus on weak domination; the results are usually transferable.}
We say that $x \in X_i$ is \emph{undominated} if no other strategy of player~$i$ dominates~$x$.
For each player $i \in N$ we denote by $X_i^{\star}$ the set of undominated strategies in $X_i$.
For a game $G$, by $\mathcal{X}^{\star}_{G}$ we denote the set of strategy profiles that consist entirely of undominated strategies, that is, $\mathcal{X}^{\star}_G = X_1^{\star} \times X_2^{\star} \times \ldots \times X_n^{\star}$.
We omit the index $G$ if it is clear from the context.

Let $G = (N, \mathcal{X}, \mathcal{U})$ be a game.
We now define the modified game obtained from $G$ after additional payments are promised to the players.
A \emph{payment promise} to player $i$ in $G$ is a function $\mathcal{X} \to \mathbb{R}$, usually denoted by $V_i$.
A \emph{payment promise} for game $G$ is the Cartesian product of the payment promises of the players: $\vvv \coloneqq V_1 \times \dots V_n$.
The \emph{modified game} $G[\mathcal{V}]$\todoH{I find the notation $G[\mathcal{V}]$ misleading as it usually means something with restriction to $\mathcal{V}$.} obtained from $G$ with a payment promise~$\mathcal{V}$ is the game $(N, \mathcal{X}, [\mathcal{U} + \mathcal{V}])$,
wherein $[\mathcal{U} + \mathcal{V}] \coloneqq \{[U_i + V_i] \mid i \in N\}$
and for $i\in N$ function~$[U_i + V_i]$ is defined as $[U_i + V_i](x) \coloneqq U_i(x) + V_i(x)$ for all~$x \in \mathcal{X}$.
The \emph{cost} $\cost(\mathcal{V})$ of a payment promise $\mathcal{V}$ is $\max_{x \in \mathcal{X}^{\star}_{G[\mathcal{V}]}} \sum_{i \in N} V_i(x) $.

We consider the following decision problem and say that
a payment promise~$\mathcal{V}$ as below \emph{implements} $\mathcal{O}$.
\probdef{\pHG}
{A game $G = (N, \mathcal{X}, \mathcal{U})$, a set of desired strategy profiles $\mathcal{O} \subseteq \mathcal{X}$, and a real $\delta \in \mathbb{R}_{\geq 0}$.}
{Is there a payment promise $\mathcal{V}$ such that $\cost(\mathcal{V}) \leq \delta$ and $ \mathcal{X}^{\star}_{G[\mathcal{V}]} \subseteq \mathcal{O}$?}

The following is a variant of \pHG\ where we want to a given set of strategy profiles to be undominated and 
say that a payment promise~$\mathcal{V}$ as below \emph{implements} $\mathcal{O}$ \emph{exactly}.
\probdef{\pEHG}
{A game $G = (N, \mathcal{X}, \mathcal{U})$, a set of desired strategy profiles $\mathcal{O} \subseteq \mathcal{X}$, and a real $\delta \in \mathbb{R}_{\geq 0}$.}
{Is there a payment promise $\mathcal{V}$ such that $\cost(\mathcal{V}) \leq \delta$ and $\mathcal{X}^{\star}_{G[\mathcal{V}]} = \mathcal{O}$?}

\ifshort We give an example in the full version. \else We give an example in \cref{sec:example}. \fi

We mainly focus on the special case where the strategy profile sets~$\ooo$ are rectangular.
A strategy profile set $\mathcal{Y}$ for a game with player set $N$ is \emph{rectangular} if for each $i \in N$ there is $Y_i \subseteq X_i$ such that $\mathcal{Y} = Y_1 \times Y_2 \times \ldots \times Y_n$.
All our hardness results indeed hold even for rectangular strategy profile sets~$\ooo$.

\paragraph{Graphical games.}
For a more succinct representation we also use the concept of a \emph{graphical game}.
This is a tuple $(G, H)$, where $G = (N, \xxx, \uuu)$ is a game and $H$ an undirected graph with vertex set~$N$ and edge set~$E$. Let us use $\ngb_H(i)$ to denote the neighborhood of a vertex $i \in N$, i.e., the set of all the vertices that are adjacent to $i$.
For every $i \in N$, the utility function $U_i$ and the potential payment promise $V_i$ map from $X_i \times X_{j_1} \times \dots \times X_{j_k}$ to $\mathbb{R}$, where $j_1, \dots, j_k$ is an arbitrary but fixed ordering of $\ngb_H(i)$. In other words, the utility of player~$i$ only depends on its own actions and the actions of its neighbors in $H$.
The \emph{degree} of $(G, H)$ is $\max_{i \in N} |\ngb_{H}(i)|$.

\paragraph{Properties of the domination relation.}
We now describe a few simple properties of the dominance relation, which are useful in our proofs. 

\ifshort\begin{observation}[\appsymb]\else\begin{observation}\fi\label{obs:transitive}
  Domination is transitive.
\end{observation}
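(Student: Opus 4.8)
The plan is to prove transitivity directly from the definition of weak domination given in the excerpt. Recall that $x$ dominates $y$ (for strategies of player~$i$) means two things together: a \emph{weak} condition, namely $U_i(x, x_{-i}) \geq U_i(y, x_{-i})$ for \emph{all} $x_{-i} \in \mathcal{X}_{-i}$, and a \emph{strictness} condition, namely there exists at least one $x_{-i}$ with strict inequality. So suppose $x$ dominates $y$ and $y$ dominates $z$, where $x, y, z \in X_i$; I want to show $x$ dominates $z$.

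First I would handle the weak (pointwise) inequality. For every $x_{-i} \in \mathcal{X}_{-i}$ we have $U_i(x, x_{-i}) \geq U_i(y, x_{-i})$ and $U_i(y, x_{-i}) \geq U_i(z, x_{-i})$, so chaining these gives $U_i(x, x_{-i}) \geq U_i(z, x_{-i})$ for all $x_{-i}$. This is just transitivity of $\geq$ on the reals applied coordinatewise, so it is immediate.

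Next I would establish the strictness condition, which is the only part that requires a moment's thought. Since $x$ dominates $y$, pick some $x^0_{-i} \in \mathcal{X}_{-i}$ with $U_i(x, x^0_{-i}) > U_i(y, x^0_{-i})$. At that same profile $x^0_{-i}$, the weak condition for ``$y$ dominates $z$'' gives $U_i(y, x^0_{-i}) \geq U_i(z, x^0_{-i})$. Combining, $U_i(x, x^0_{-i}) > U_i(y, x^0_{-i}) \geq U_i(z, x^0_{-i})$, so $U_i(x, x^0_{-i}) > U_i(z, x^0_{-i})$, witnessing the required strict inequality for ``$x$ dominates $z$.'' Together with the pointwise part, this shows $x$ dominates $z$.

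The main (and really only) subtlety is that one must be careful to reuse the \emph{same} witnessing profile $x^0_{-i}$ for the strict inequality rather than separately chosen ones, and to note that the weak inequality from the \emph{other} domination holds precisely at that profile; this is what lets the strict sign propagate. One could equally start from the witness provided by ``$y$ dominates $z$'' and use the weak bound from ``$x$ dominates $y$'' there instead — either choice works symmetrically. No further machinery is needed, so this is a short argument once the two halves of the definition are separated.
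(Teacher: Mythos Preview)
Your argument is correct and matches the paper's proof essentially verbatim: chain the pointwise weak inequalities, then take the strict witness from ``$x$ dominates $y$'' and combine it with the weak bound from ``$y$ dominates $z$'' at that same profile. There is nothing to add.
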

\appendixproof{obs:transitive}{
  \begin{proof}
    Let $G = (N, \xxx, \uuu)$ be a game.
    Assume that for player $i \in N$ and three strategies~$x,y,z\in X_i$,
    strategy $x$ dominates $y$ and $y$ dominates $z$.
    Then, for every $x_{-i} \in \xxx_{-i}$, $U_i(x, x_{-i}) \geq U_i(y, x_{-i}) \geq U_i(z, x_{-i})$. We also have that there is an $x'_{-i} \in \xxx_{-i}$ such that $U_i(x, x'_{-i}) > U_i(y, x'_{-i})$. Therefore $U_i(x, x'_{-i}) > U_i(y, x'_{-i}) \geq U_i(z', x_{-i})$, so $x$ dominates~$z$.
  \end{proof}
}

\ifshort\begin{observation}[\appsymb]\else\begin{observation}\fi\label{obs:asymmetric}
  Domination is asymmetric. In other words, if $x$ dominates $y$, then $y$ does not dominate $x$.
\end{observation}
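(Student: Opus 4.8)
The plan is to prove asymmetry directly from the definition of domination, observing that it is essentially a consequence of the antisymmetry of the $\geq$ relation on the reals. First I would assume, toward a contradiction, that both $x$ dominates $y$ and $y$ dominates $x$ for some player $i \in N$ and strategies $x, y \in X_i$.

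From the assumption that $x$ dominates $y$, the definition gives us that $U_i(x, x_{-i}) \geq U_i(y, x_{-i})$ for all $x_{-i} \in \mathcal{X}_{-i}$, together with the existence of some witness profile $x'_{-i} \in \mathcal{X}_{-i}$ for which the inequality is strict, i.e.\ $U_i(x, x'_{-i}) > U_i(y, x'_{-i})$. Symmetrically, from the assumption that $y$ dominates $x$, we obtain $U_i(y, x_{-i}) \geq U_i(x, x_{-i})$ for all $x_{-i} \in \mathcal{X}_{-i}$.

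The contradiction then follows by instantiating the second (universal) inequality at the witness profile $x'_{-i}$ supplied by the first domination: this yields $U_i(y, x'_{-i}) \geq U_i(x, x'_{-i})$, which together with the strict inequality $U_i(x, x'_{-i}) > U_i(y, x'_{-i})$ is impossible for real numbers. Hence the two domination relations cannot both hold, establishing asymmetry.

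I do not expect any real obstacle here: the statement is immediate once the definition is unpacked, and the only subtlety is making sure to feed the strict-inequality witness of one direction into the weak-inequality universal quantifier of the other. This is the same bookkeeping that already appears in the proof of \cref{obs:transitive}, so the argument is essentially a one-line contradiction.
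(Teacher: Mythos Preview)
Your proposal is correct and essentially matches the paper's own proof: both argue by contradiction, combining the universal weak inequality from one direction of domination with the strict-inequality witness of the other. The only cosmetic difference is that the paper first concludes $U_i(x, x_{-i}) = U_i(y, x_{-i})$ for all $x_{-i}$ before noting the absence of a strict witness, whereas you instantiate directly at the witness profile; the underlying argument is identical.
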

\appendixproof{obs:asymmetric}{
  \begin{proof}
    Let $G = (N, \xxx, \uuu)$ be a game. Assume, towards a contradiction, that for player $i \in N$ there are two strategies~$x,y\in X_i$ such that $x$ dominates $y$ and $y$ dominates~$z$.
    Then, for every $x_{-i} \in \xxx_{-i}$, it holds that $U_i(x, x_{-i}) \geq U_i(y, x_{-i})$ and $U_i(y, x_{-i}) \geq U_i(x, x_{-i})$.
    Thus we have that $U_i(x, x_{-i}) = U_i(y, x_{-i})$ holds for every $x_{-i} \in \xxx_{-i}$.
    However, this means that there exists \emph{no}~$x'_{-i} \in \xxx_{-i}$ such that $U_i(x, x'_{-i}) > U_i(y, x'_{-i})$, a contradiction to $x$ dominating~$y$.
  \end{proof}
}

\ifshort\begin{observation}[\appsymb]\else\begin{observation}\fi\label{obs:undom_dom}
  Every dominated strategy is dominated by some undominated strategy.
\end{observation}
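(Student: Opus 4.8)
The plan is to prove \cref{obs:undom_dom} by a straightforward extremal argument: starting from any dominated strategy, I repeatedly replace it by a strategy that dominates it, and I argue that this process must terminate at an undominated strategy. Concretely, let $y \in X_i$ be dominated. Since $X_i$ is a finite set, I consider the set $D_y \coloneqq \{z \in X_i \mid z \text{ dominates } y\}$, which is nonempty by assumption. Within this set I pick a strategy $x$ that is undominated \emph{relative to} $D_y$, i.e.\ a strategy in $D_y$ that is not itself dominated by any other strategy in $D_y$. Such an $x$ exists because $D_y$ is finite and domination is a strict partial order on it (it is transitive by \cref{obs:transitive} and asymmetric by \cref{obs:asymmetric}, hence acyclic), so $D_y$ has at least one maximal element.

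The key step is then to verify that this maximal element $x$ is in fact undominated in all of $X_i$, not merely within $D_y$. Suppose toward a contradiction that some $w \in X_i$ dominates $x$. By transitivity (\cref{obs:transitive}), since $w$ dominates $x$ and $x$ dominates $y$, we get that $w$ dominates $y$, so $w \in D_y$. But then $w$ is an element of $D_y$ dominating $x$, contradicting the maximality of $x$ within $D_y$. Hence no strategy dominates $x$, so $x$ is undominated, and it dominates $y$, which is exactly what we need.

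I expect the main (though still minor) obstacle to be stating the maximality argument cleanly: one must invoke finiteness of $X_i$ together with the acyclicity of the domination relation to guarantee that a maximal element of $D_y$ exists, rather than merely asserting it. The finiteness of the strategy sets is implicit in the problem setup (games in standard form have finitely many strategies per player), so this should be uncontroversial, but it is the one hypothesis the argument genuinely relies on. Everything else reduces to the already-established facts that domination is transitive and asymmetric, so the proof is short once the maximal-element framing is in place.
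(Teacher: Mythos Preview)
Your proof is correct and rests on the same ingredients as the paper's: finiteness of $X_i$ together with transitivity (\cref{obs:transitive}) and asymmetry (\cref{obs:asymmetric}) of domination. The paper phrases the argument as building an infinite ascending chain of dominators to obtain a contradiction, whereas you directly pick a maximal element of $D_y$; these are the two standard equivalent ways to exploit that a finite strict partial order has maximal elements, so the approaches are essentially the same.
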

\appendixproof{obs:asymmetric}{
  \begin{proof}
    Assume, towards a contradiction, that for a player $i \in N$ there is a dominated strategy $x \in X_i$ that is not dominated by any undominated strategy. This implies that the strategy $y \in X_i$ that dominates $x$ is in turn dominated by some other strategy $z \in X_i$.
    By transitivity of domination $z$ dominates $x$ and because $x$ is not dominated by any undominated strategy, $z$ is also dominated.
    By asymmetry of domination, $x, y$ and $z$ are all distinct.
    By repeating this argument, we obtain that $x$ is dominated by an infinite number of strategy profiles, a contradiction to the number of strategy profiles being finite.
  \end{proof}
}

This immediately implies the following observation:
\begin{observation}\label{obs:undom_noempty}
  The set of undominated strategies is non-empty.
\end{observation}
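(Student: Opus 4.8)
The plan is to obtain this as an immediate corollary of \cref{obs:undom_dom}. Fix any player $i \in N$. First I would note that the strategy set $X_i$ is non-empty: this is part of the standing assumption that each player genuinely has strategies to choose from (otherwise the product $\xxx$ and the utility functions would be degenerate). With non-emptiness in hand, the claim reduces to a two-line case distinction.

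Pick an arbitrary strategy $x \in X_i$. If $x$ is itself undominated, then $x \in X_i^{\star}$ and we are done. Otherwise $x$ is dominated, and \cref{obs:undom_dom} supplies an undominated strategy that dominates~$x$; this strategy witnesses $X_i^{\star} \neq \emptyset$. In either case $X_i^{\star}$ is non-empty, which is exactly what the statement asserts (the analogous conclusion then holds for every player, and hence $\xxx^{\star}$ is a product of non-empty sets).

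Since the heavy lifting — ruling out an infinite ascending chain of dominating strategies — has already been carried out in the proof of \cref{obs:undom_dom} via the finiteness of $X_i$, there is no real obstacle here. The only point that requires a moment's care is the appeal to non-emptiness of $X_i$, which is precisely what excludes the vacuous situation in which there are no strategies to be undominated in the first place; everything else is a direct invocation of the preceding observation.
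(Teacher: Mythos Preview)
Your proposal is correct and matches the paper's approach exactly: the paper presents \cref{obs:undom_noempty} as an immediate consequence of \cref{obs:undom_dom} without further argument, and your case distinction is precisely the natural way to spell out that implication. The explicit mention of non-emptiness of $X_i$ is a reasonable bit of extra care, but otherwise there is nothing to add.
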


\toappendix{
  \subsection{Comment on NP-containment of \pHG}
  Previously it was claimed that a payment promise is a certificate showing that \pHG\ is in NP~\cite{deng_complexity_2016}.
  However, without further argument it is not clear that the payment promise can be encoded with polynomial bits in the input length since it consists of arbitrary reals.
  We do conjecture \pHG\ (and indeed \pEHG) to be contained in NP but since there is currently no published proof, we refrain from claiming NP-completeness.
}

\toappendix{
  \subsection{Example}\label{sec:example}
  In this section we provide an example of \pHG\ to give the reader more intuition of the problem.

  Consider a 2-player game with the set of players $N = \{p_1, p_2\}$. Player $p_1$ has the strategy set $X_1 = \{s_1, s_2, s_3\}$ and $p_2$ has $X_2 = \{t_1, t_2\}$. Let us define the utility function of $p_1$ as:
  \begin{align*}
    U_1(s_1, t_1) = 1, \quad & U_1(s_1, t_2) = 1,\\
    U_1(s_2, t_1) = 2, \quad & U_1(s_2, t_2) = 0,\\
    U_1(s_3, t_1) = 0, \quad & U_1(s_3, t_2) = 1,
  \end{align*}
  and the utility function of $p_2$ as:
  \begin{align*}
    U_2(s_1, t_1) = 1, \quad & U_2(s_1, t_2) = 1,\\
    U_2(s_2, t_1) = 1, \quad & U_2(s_2, t_2) = 1,\\
    U_2(s_3, t_1) = 0, \quad & U_2(s_3, t_2) = 0.
  \end{align*}
  We visualize the utility functions in \cref{tab:exp1}.
  \begin{table}[t!]
    \centering
    \setlength{\extrarowheight}{2pt}
    \begin{tabular}{cc|c|c|}
      & \multicolumn{1}{c}{} & \multicolumn{2}{c}{Player $p_2$}\\
      & \multicolumn{1}{c}{} & \multicolumn{1}{c}{$t_1$}  & \multicolumn{1}{c}{$t_2$} \\\cline{3-4}
      \multirow{3}*{Player $p_1$}  & $s_1$ & $1, 1$ & $1, 1$ \\\cline{3-4}
      & $s_2$ & $2, 1$ & $0, 1$ \\\cline{3-4}
      & $s_3$ & $0, 0$ & $1, 0$ \\\cline{3-4}
    \end{tabular}
    \caption{The utility matrix from the example in \cref{sec:example}. The first entry is the utility of $p_1$ and the second the utility of $p_2$.}\label{tab:exp1}
  \end{table}
  \begin{table}[t!]
    \centering
    \setlength{\extrarowheight}{2pt}
    \begin{tabular}{cc|c|c|}
      & \multicolumn{1}{c}{} & \multicolumn{2}{c}{Player $p_2$}\\
      & \multicolumn{1}{c}{} & \multicolumn{1}{c}{$t_1$}  & \multicolumn{1}{c}{$t_2$} \\\cline{3-4}
      \multirow{3}*{Player $p_1$}  & $s_1$ & $1 {\color{green!80!black} {} + 1}, 1 {\color{red} {} + 0.1}$ & $1, 1$ \\\cline{3-4}
      & $s_2$ & $2, 1$ & $0, 1$ \\\cline{3-4}
      & $s_3$ & $0, 0$ & $1, 0$ \\\cline{3-4}
    \end{tabular}
    \caption{The utility matrix from the example in \cref{sec:example} after adding the payment promises $\vvv$. The utility in red is $V_2$.}\label{tab:exp2}
  \end{table}
  \begin{table}[t!]
    \centering
    \setlength{\extrarowheight}{2pt}
    \begin{tabular}{cc|c|c|}
      & \multicolumn{1}{c}{} & \multicolumn{2}{c}{Player $p_2$}\\
      & \multicolumn{1}{c}{} & \multicolumn{1}{c}{$t_1$}  & \multicolumn{1}{c}{$t_2$} \\\cline{3-4}
      \multirow{3}*{Player $p_1$}  & $s_1$ & $1 {\color{green!80!black} {} + 1}, 1$& $1, 1$ \\\cline{3-4}
      & $s_2$ & $2, 1 {\color{blue} {} + 0.1}$ & $0, 1$ \\\cline{3-4}
      & $s_3$ & $0, 0$ & $1, 0$ \\\cline{3-4}
    \end{tabular}
    \caption{The utility matrix from the example in \cref{sec:example} after adding the payment promises $\vvv$. The utility in blue is $V'_2$.}\label{tab:exp3}
  \end{table}
  We can see that no matter what $p_2$ plays, strategy~$s_1$ is as good as strategy~$s_3$ for player~$p_1$, i.e., the utility that $p_1$ gets by playing~$s_1$ is at least as high as that by playing~$s_3$.
  Additionally, if $p_2$ plays~$t_1$, then the utility that $p_1$ gets by playing $s_1$ rather than $s_3$ is strictly higher. Thus, $s_1$ dominates $s_3$ for $p_1$,
  and so $s_3$ is dominated. Since no other strategy dominates $s_1$, strategy~$s_1$ is undominated.

  The strategy $s_2$ is also undominated, because neither $s_1$ nor $s_3$ dominates it: We can see that $U_1(s_2, t_1) > U_1(s_1, t_1)$, so it is impossible that for every $t_i \in X_2, U_1(s_1, t_i) \geq U_1(s_2, t_i)$, which is one of the requirements for $s_1$ dominating $s_2$.
  The same reasoning precludes $s_3$ from dominating~$s_1$.
  Summarizing, the set of undominated strategies for $p_1$ is $X_1^{\star} = \{s_1, s_2\}$.

  For player~$p_2$, it holds that $U_2(s, t_1) = U_2(s, t_2)$ for every $s \in X_1$.
  This means that neither of them dominates the other one.
  Thus, the set of undominated strategies for player~$p_2$ is $X_2^{\star} = \{t_1, t_2\}$.
  The set of undominated strategy profiles is $\xxx^{\star} = \{s_1, s_2\} \times \{t_1, t_2\}$.

  Assume that we want to implement the strategy profile set $\ooo = O_1 \times O_2$, where $O_1 = \{s_1, s_3\}$ and $O_2 = \{t_1\}$.

  Because $s_2 \notin O_1$ but $s_2$ is undominated, we need to use some strategy in $O_1$ to dominate $s_2$.
  If we promise $p_1$ utility $1$ extra when he plays $s_1$ and $p_2$ plays $t_1$,
  then strategy~$s_1$ dominates $s_2$.
  In other words, we construct our payment promise function for $p_1$ as \[V_1(x, y)  = \begin{cases}1, \text{ if }x = s_1, y = t_1\\ 0, \text{ otherwise.}\end{cases}\]

  \noindent The payment promise for $p_1$ is in {\color{green!80!black} green} in \cref{tab:exp2}.
  As for player~$p_2$, we must make $t_1$ dominate $t_2$.
  If we promise
  \[V_2(x, y)  = \begin{cases}0.1, \text{ if }x = s_1, y = t_1, \\ 0, \text{ otherwise,}\end{cases}\]
  then $t_1$ will dominate $t_2$; see the promise in {\color{red} red} in \cref{tab:exp2}.
  Let $\mathcal{V}$ denote the set of payment promises, i.e., $\vvv = \{V_1, V_2\}$.
  Then, $\mathcal{V}$ implements $\ooo$ because  $ X_1^{\star} = \{s_1\}$ and $X_2^{\star} = \{t_1\}$ and thus $\mathcal{X}^{\star}_{G[\mathcal{V}]}=X_1^{\star} \times X_2^{\star}$ satisfies $\emptyset \neq \mathcal{X}^{\star}_{G[\mathcal{V}]} \subseteq \ooo$.
  Note that, since $X_1^{\star} \neq O_1$, this implementation is not exact.
  The cost of the implementation is $\cost(\vvv) = \max_{x \in \xxx^{\star}_{G[\vvv]}}\sum_{i \in N}V_i(x) = V_1(s_1, t_1) + V_2(s_1, t_1) = 1.1$.

  However, instead of paying player~$p_2$ an extra utility of $0.1$ when he plays $t_1$ and player~$p_1$ plays $s_1$, we can also pay $p_2$ an extra utility of $0.1$ when player~$p_1$ plays $s_2$ and player $p_2$ pays $t_1$.
  That is, %

  \[V'_2(x, y)  = \begin{cases}0.1, \text{ if }x = s_2, y = t_1, \\ 0, \text{ otherwise;}\end{cases}\]
  see the {\color{blue} blue} payment promise in \cref{tab:exp3}.
  
  The payment promise for player~$p_1$ remains the same.
  Let $\mathcal{V}'$ denote the promise payment, i.e., $\vvv' = \{V_1, V'_2\}$.
  Then, one can verify that $\mathcal{V}'$ also implements $\ooo$ with $\xxx^{\star}_{G[\vvv']} = \{s_1\}\times \{t_1\}$. %
  But, now the cost of the implementation is only
  \begin{align*}
    \cost(\vvv') &= \max_{x \in \xxx^{\star}_{G[\vvv']}}\left(V_1(x) + V'_2(x)\right) \\
                &= V_1(s_1, t_1) + V'_2(s_1, t_1) \\
                &= 1 + 0 = 1.
  \end{align*}

} %

\section{\pHG\ is NP-hard for Two Players and Zero Budget }\label{sec:hg-nph-2p-0}
\appendixsection{sec:hg-nph-2p-0}

In this section we prove that \pHG\ is NP-hard even in the very restricted case where we have two players and the budget is zero.

\begin{theorem}\label{thm:2p}
  \pHG\ is NP-hard, even for two players and cost at most $0$.
\end{theorem}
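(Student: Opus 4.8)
The plan is to give a polynomial-time reduction from an NP-hard selection problem; I would use \pTC\ (each vertex must pick one of three colours, adjacent vertices differing), since this matches the intuition that the core difficulty is choosing, for each undesired strategy, a dominating desired strategy. The first thing I would record is the engine of the whole argument. (Observe that if payment promises were allowed to take negative values, one could dominate \emph{any} undesired strategy $w$ for free by setting $V_i(w,\cdot)$ very negative, making the problem trivially a yes-instance; hence hardness relies on payments being non-negative, as in the Monderer--Tennenholtz model.) Under non-negativity, the budget constraint $\cost(\mathcal{V})\le 0$ together with $V_i\ge 0$ forces $V_1\equiv V_2\equiv 0$ on every strategy profile that is undominated in $G[\mathcal{V}]$; positive payments may occur only on profiles that end up dominated. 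I would prove this as a short lemma first, because it implies the following structural fact for two players: if a desired row $y\in X_1^{\star}$ is to dominate an undesired row $w$, then on every \emph{surviving column} $t\in X_2^{\star}$ we must have $U_1(y,t)\ge U_1(w,t)$ in the \emph{original} utilities, while on dominated columns we retain full freedom to pay. Thus domination across surviving columns cannot be bought and must be designed into $\mathcal{U}$.

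With this in hand I would build a two-player game $G$ with a rectangular desired set $\mathcal{O}=O_1\times O_2$ and budget $\delta=0$. Player~$1$ (rows) gets three desired ``colour'' strategies together with one undesired ``vertex'' strategy $w_v$ per vertex $v$ (plus baseline desired rows as needed); player~$2$ (columns) gets gadget strategies whose surviving subset is meant to encode a candidate colouring and to test the edges. The utilities would be tuned so that: (i) the pattern of undominated columns is constrained to encode exactly one colour per vertex; (ii) by the structural fact above, the undesired row $w_v$ can be dominated by the colour-$c$ row, for free on the surviving columns, precisely when assigning colour $c$ to $v$ is consistent with every edge incident to $v$; and (iii) no desired row dominates another desired row and no $w_v$ is dominated ``spuriously'', so that $O_i\subseteq X_i^{\star}$ is maintainable and domination faithfully mirrors a colour choice.

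Correctness would then be shown in both directions. For the forward direction, from a proper $3$-colouring I let the corresponding columns survive and route all required positive payments onto the dominated columns, realising for each vertex the domination of $w_v$ by its colour row; by construction these payments sit only on dominated profiles, so $\cost(\mathcal{V})=0$ and $\mathcal{X}^{\star}_{G[\mathcal{V}]}\subseteq\mathcal{O}$. For the converse, given any zero-cost implementation I apply the zero-on-undominated lemma to read off, for each undesired $w_v$, a desired colour row dominating it, and the feasibility of that domination on the surviving columns forces the incident edge constraints; the induced assignment is therefore a proper $3$-colouring.

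The hard part will be engineering the \emph{coupling} so that it is simultaneously tight and leak-free: the surviving-column pattern must be forced to encode a single consistent colour per vertex \emph{and} to make any monochromatic edge infeasible, without opening side-channels---unintended dominators among the rows, unintended column-survival patterns, or payments on dominated profiles that secretly enable an illegal domination. One must also check the global consistency that making all undesired rows dominated does not knock a desired row or column out of $X^{\star}$ in a way that either violates $\mathcal{X}^{\star}_{G[\mathcal{V}]}\subseteq\mathcal{O}$ or leaves some $w_v$ undominated. Since payments vanish on all surviving profiles, every one of these dominations must be carried by the original utilities alone, so the entire combinatorial burden rests on choosing the numerical gadget values correctly; if the edge-coupling through the columns proves too brittle under \pTC, I would fall back to a reduction from \pXTC, where ``each undesired element-strategy needs a covering set-strategy'' gives a more rigid selection structure.
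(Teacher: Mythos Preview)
Your zero-on-undominated lemma and its corollary---that on every surviving column a surviving dominator must already beat the undesired row in the \emph{original} utilities---are precisely the engine the paper uses (implicitly, inside the backward direction). So the core diagnosis is right.

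The paper, however, goes straight to your fallback: it reduces from \pXTC, not \pTC. Both players receive the \emph{same} strategy set---one undesired element-strategy $a_i$ per element and one desired strategy $c^{a_i}_j$ per incidence $a_i\in C_j$---with utilities in $\{0,1,2\}$. Utility-$2$ entries force that any cost-$0$ dominator of $a_i$ for player~$1$ must be some $c^{a_i}_j$, and simultaneously that at most one of the three candidate columns $c^{a_i}_{\cdot}$ can survive for player~$2$; utility-$1$ entries enforce intra-set consistency (if $c^{a_n}_j$ survives as a column and $a_i\in C_j$, then $a_i$ must also be covered by $C_j$). The piece your sketch leaves open---guaranteeing that player~$2$'s surviving columns hit \emph{every} element rather than collapsing to a small set---is handled by a cyclic structure on player~$2$'s utilities (indices taken modulo $3\enn$): $c^{a_{i-1}}_p\in X_1^{\star}$ forces some $c^{a_i}_l\in X_2^{\star}$, and this cascades around the cycle to give full coverage.

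Two concrete gaps in your \pTC\ sketch. First, ``three desired colour strategies'' read literally (three rows total, one per colour) will not work: a single global colour-$c$ row cannot simultaneously carry the vertex-specific edge constraints needed to dominate each $w_v$; you need $(v,c)$-rows, three per vertex (indeed the paper's separate \pEHG\ reduction from \pTC\ does exactly this, but at budget~$1$, not~$0$). Second, and this is the genuinely missing idea, you give no mechanism forcing player~$2$'s surviving column set to include at least one column \emph{per vertex}. Without it the backward direction fails: a cost-$0$ implementation could let only a handful of columns survive, rendering the surviving-column constraints for most $w_v$ vacuous and yielding no global colouring. The paper's cycle trick is the device that closes this hole; for \pTC\ you would need an analogous cascade over the vertices \emph{and} a synchronisation gadget making player~$1$'s and player~$2$'s colour choices coincide---exactly the brittleness you anticipated, and the reason \pXTC\ is the cleaner source here.
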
%
\noindent
We reduce from the following NP-hard problem \cite{schaefer1978complexity}:

\probdef{\pXTC}
{An integer $\enn$, a collection of elements $\aaa = \{a_0, \dots, a_{3\enn - 1}\}$ and a collection of sets $\ccc = \{C_0, \dots, C_{3\enn - 1}\}$ such that $C_j \subset \aaa$ and $|C_j| = 3$ for every $j \in \{0, \dots, 3\enn - 1\}$ and every element $a_i \in \aaa$ appears in exactly three sets, i.e., $|\{C \in \ccc | a_i \in C\}| = 3$ for every $a_i \in \aaa$.}
{Is there an \emph{exact cover} of $\aaa$,
  i.e., a collection $\sss \subset \ccc$
  s.t.\ $|\sss| = \enn$ and $\aaa=\bigcup_{C \in \sss} C$? %
}

Let $\mathcal{I} = (\enn, \aaa, \ccc)$ be an instance of \pXTC.
We create an instance $\mathcal{I'}$ of  \pHG~with two players $p_1$ and $p_2$. Let $X_1 = X_2 = \aaa \cup \{c^{a_i}_j \mid C_j \in \ccc, a_i \in C_j\}$ and $O_1 = O_2 = \{c^{a_i}_j \mid C_j \in \ccc, a_i \in C_j\}$. 
For each element $a_i \in \aaa$ and each set $C_j \in \ccc$ with $a_i \in C_j$, we define the utilities as 
\begin{align*}
  U_1(a_i, c^{a_i}_j) &= 2  & U_1(a_i, c^{a_n}_j) &= 1 \\
  U_1(c^{a_i}_j, c^{a_i}_j) &= 2 & U_1(c^{a_i}_j, c^{a_n}_j) &= 1 \text{ where } a_n \in C_j \setminus a_i.
\end{align*}

Throughout, we take $i + 1$ and $i - 1$ modulo $3\enn$.
For each $a_i \in \aaa$ and each set $C_p \in \ccc$ such that $a_{i - 1} \in C_p$, we define $U_2(c^{a_{i-1}}_p, a_i) = 1$. For $a_i \in \aaa$ and each set $C_p, C_j \in \ccc$ (not necessarily distinct) such that $a_{i - 1} \in C_p$ and $a_i \in C_j$, define $U_2(c^{a_{i-1}}_p, c^{a_{i}}_j) = 1$.
The undefined utilities are $0$ and the budget $\delta$ is also $0$.
The utilities of $p_{1}$ and $p_2$ are shown in \cref{tab:2p_p1}.

\begin{table*}
  \centering
  \small
  \setlength{\tabcolsep}{3pt}
    \setlength{\extrarowheight}{3pt}
    \begin{tabular}{c|c|c|c>{\scriptsize}cc|c|c>{\scriptsize}cc>{\scriptsize}cc|}
         \multicolumn{1}{c}{}\multirow{2}*{\rotatebox{40}{Player $p_1$}} & \multicolumn{11}{c}{Player $p_2$}\\
      \multicolumn{1}{c}{} & \multicolumn{1}{c}{$c^{a_0}_{0,1}$}  & \multicolumn{1}{c}{$c^{a_0}_{0,2}$} & \multicolumn{1}{c}{$c^{a_0}_{0,3}$} & \multicolumn{1}{c}{\scriptsize\dots} & \multicolumn{1}{c}{$ c^{a_i}_{i,1}$} & \multicolumn{1}{c}{$c^{a_i}_{i,2}$}& \multicolumn{1}{c}{$c^{a_i}_{i,3}$} & \multicolumn{1}{c}{\scriptsize\dots} &\multicolumn{1}{c}{$ c^{a_j}_{r}$} & \multicolumn{1}{c}{\scriptsize\dots} & \multicolumn{1}{c}{$c^{a_{3\enn - 1}}_{3\enn-1,3}$}  \\\cline{2-12}
      $c^{a_0}_{0,1}$ & 2 & & & \dots & & & & \dots & & \dots &  \\\cline{2-12}
      $c^{a_0}_{0,2}$ &  &  2& & \dots & & & & \dots & & \dots &  \\\cline{2-12}
      $c^{a_0}_{0,3}$ & & & 2 & \dots & & & & \dots & & \dots & \\\cline{2-12}
      \scalebox{0.5}{\vdots} & \scalebox{0.5}{\vdots} & \scalebox{0.5}{\vdots} & \scalebox{0.5}{\vdots} & &\scalebox{0.5}{\vdots} &\scalebox{0.5}{\vdots} &\scalebox{0.5}{\vdots} & &\scalebox{0.5}{\vdots} & &\scalebox{0.5}{\vdots} \\
      $c^{a_i}_{i,1}$ & & & &\dots & 2 & & & \dots & 1 &\dots &\\\cline{2-12}
      $c^{a_i}_{i,2}$ &&  & &\dots & & 2 & & \dots & & \dots & \\\cline{2-12}
      $c^{a_i}_{i,3}$ & & & &\dots & & & 2 & \dots & & \dots & \\\cline{2-12}
      \scalebox{0.5}{\vdots} & \scalebox{0.5}{\vdots} & \scalebox{0.5}{\vdots} & \scalebox{0.5}{\vdots} & &\scalebox{0.5}{\vdots} &\scalebox{0.5}{\vdots} &\scalebox{0.5}{\vdots} & &\scalebox{0.5}{\vdots} & &\scalebox{0.5}{\vdots} \\
      $c^{a_j}_{r}$ & & & & \dots & 1 &  & & \dots & 2 & \dots & \\
      \scalebox{0.5}{\vdots} & \scalebox{0.5}{\vdots} & \scalebox{0.5}{\vdots} & \scalebox{0.5}{\vdots} & &\scalebox{0.5}{\vdots} &\scalebox{0.5}{\vdots} &\scalebox{0.5}{\vdots} & &\scalebox{0.5}{\vdots} & &\scalebox{0.5}{\vdots} \\
      $c^{a_{3\enn - 1}}_{3\enn - 1,3}$ & & & &\dots & & & & \dots & &\dots &  2 \\\cline{2-12}
      $a_0$ & 2 & 2 & 2 & \dots & & & & \dots & & \dots & \\\cline{2-12}
      \scalebox{0.5}{\vdots} & \scalebox{0.5}{\vdots} & \scalebox{0.5}{\vdots} & \scalebox{0.5}{\vdots} & &\scalebox{0.5}{\vdots} &\scalebox{0.5}{\vdots} &\scalebox{0.5}{\vdots} & &\scalebox{0.5}{\vdots} & &\scalebox{0.5}{\vdots} \\
      $a_i$ & & & & \dots & 2 & 2 & 2 & \dots & 1 & \dots &  \\
      \scalebox{0.5}{\vdots} & \scalebox{0.5}{\vdots} & \scalebox{0.5}{\vdots} & \scalebox{0.5}{\vdots} & &\scalebox{0.5}{\vdots} &\scalebox{0.5}{\vdots} &\scalebox{0.5}{\vdots} & &\scalebox{0.5}{\vdots} & &\scalebox{0.5}{\vdots} \\
      $a_{3\enn - 1}$ & & & & \dots & & & & \dots & & \dots & 2 \\\cline{2-12}
    \end{tabular}
    \setlength{\extrarowheight}{8pt}
    \setlength{\tabcolsep}{5pt}
    \begin{tabular}{c|cccccc|}
      \multicolumn{1}{c}{}\multirow{2}*{\rotatebox{40}{Player $p_2$}} & \multicolumn{6}{c}{Player $p_1$} \\
      \multicolumn{1}{c}{} & $c^{a_{i - 1}}_{i - 1,1}$ & $c^{a_{i - 1}}_{i - 1,2}$ & $c^{a_{i - 1}}_{i - 1,3}$ & $c^{a_i}_{i,1}$ & $c^{a_i}_{i,2}$ & \multicolumn{1}{c}{$c^{a_i}_{i,3}$} \\\cline{2-7}
        $ c^{a_i}_{i,1}$  & 1 & 1 & 1 & & & \\
       $c^{a_i}_{i,2}$ &1 &1 &1 & & & \\
       $c^{a_i}_{i,3}$ &1 &1 &1 & & & \\
       $ c^{a_{i + 1}}_{i + 1,1}$  & & & & 1& 1& 1\\
       $c^{a_{i + 1}}_{i + 1,2}$ & & & & 1& 1& 1\\
       $c^{a_{i + 1}}_{i + 1,3}$ &  & & & 1& 1& 1\\
       \scalebox{0.5}{\vdots}  & \scalebox{0.5}{\vdots} & \scalebox{0.5}{\vdots} & \scalebox{0.5}{\vdots} & \scalebox{0.5}{\vdots} & \scalebox{0.5}{\vdots} & \scalebox{0.5}{\vdots} \\
       $ a_{i} $ & 1 & 1 & 1 & & & \\
       $ a_{i + 1} $  & & & &1 &1 &1 \\\cline{2-7}
    \end{tabular}
    \caption{Left: The utility matrix of $p_1$ from the proof of \cref{thm:2p}. For each element $a_{\ell} \in \aaa$, let $C_{\ell, 1}, C_{\ell, 2}, C_{\ell, 3}$ denote the sets containing it. We assume that $C_{i,1} = C_r$ for some~$r$ and thus $a_i \in C_r$. Columns corresponding to strategies $a_i$ for $p_2$ are omitted: their values are~0.
      Right: The utility matrix of $p_2$ from the proof of \cref{thm:2p}. }\label{tab:2p_p1}
  \end{table*}

  \smallskip
  Before continuing with the proof, let us explain some intuition.
  Let $a_i \in \aaa$ and denote the sets containing $a_i$ as $C_{i, 1}, C_{i, 2}, C_{i, 3}$.
  The utilities of value 2 for $p_1$ enforce that for every $a_i \in \aaa$ exactly one of the strategies $c^{a_i}_{i,1}, c^{a_i}_{i,2}, c^{a_i}_{i,3}$ can be undominated for $p_2$:
  To dominate $a_i$ for $p_1$ with, say, $c^{a_i}_{i,2}$, we must promise a positive amount for playing $c^{a_i}_{i,2}$ whenever $p_2$ plays $c^{a_i}_{i,3}$ or $c^{a_i}_{i,1}$.
  Thus we must have that neither of those latter strategies is undominated for $p_2$ in order to stay within the budget.

  The utilities of value 1 for $p_1$ enforce consistency, i.e., if $a_s \in \aaa$ is covered by $C_r$, then every $a_i \in C_r$ must also be covered by $C_r$. If $c^{a_s}_r$ is undominated for $p_2$, then if we were to dominate $a_i \in C_r$ with $c^{a_i}_{i, 2}$ where $C_{i,2} \neq C_r$, we would have to promise $c^{a_i}_{i, 2}$ to pay at least 1 when $p_2$ plays  $c^{a_s}_r$.
  But since these are both undominated strategies, we would exceed the budget.

  The utilities for $p_2$ enforce that we cover every element $a_i \in \aaa$.
  Player $p_1$ always tries to match the element $p_2$ is playing, whereas $p_2$ tries to be one ahead.
  This prevents the two players from picking some element $a_n \in \aaa$ and only playing the strategies related to that.

  \smallskip
  Formally, we claim that $\mathcal{I}$ is a positive instance of \pXTC~if and only if $\mathcal{I'}$ is a positive instance of \pHG.

For the forwards direction, assume that $(\enn, \aaa, \ccc)$ admits an exact cover $\sss$. 
For each element $a_i \in \aaa$ and two distinct sets $C_j, C_p \in \ccc$ with $a_i \in C_j \cap C_p$, where $C_j \in \sss$ and for each element $a_n \in C_j$, we define $V_1(c^{a_i}_j, c^{a_n}_p) = \infty$.
For each element $a_i \in \aaa$ and two distinct sets $C_j, C_p \in \ccc$ with $a_i \in C_j \in \sss$, while $a_{i -1} \in C_p$ but $C_p \notin \sss$ , we define $V_2(c^{a_{i-1}}_p, c^{a_i}_j) = \infty$.
To show this is a valid implementation, we will observe that $\xxx^{\star} \subseteq \ooo$ and $\cost(\vvv) = 0$.
\ifshort\begin{claim}[\appsymb]\else\begin{claim}\fi\label{obs:2p1}
Since $\sss$ is an exact cover, for every $a_i \in \aaa$ there is exactly one $C_j \in \ccc$ such that $a_i \in C_j$. For such $a_i, C_j$ and for every $C_l \in \ccc \setminus \{C_j\}$ such that $a_i \in C_l$, we have that $c^{a_i}_j$ dominates both $a_i$ and every $c^{a_i}_l$ for both $p_1$ and $p_2$.
\end{claim}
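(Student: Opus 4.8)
The plan is to verify the four claimed dominations directly, checking the domination inequality pointwise over the opponent's strategy set in the modified game $G[\vvv]$. The opening sentence is immediate: since $\sss$ is an exact cover it partitions $\aaa$, so each $a_i$ lies in exactly one $C_j \in \sss$; fix this $C_j$, and let $C_l$ be any other set with $a_i \in C_l$ (so $C_l \notin \sss$). I would handle the two players separately and, for each, treat the two dominated strategies $a_i$ and $c^{a_i}_l$ together, as the arguments differ only in bookkeeping.

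For player $p_1$ the strategies sit in the first coordinate, so I compare over all $y \in X_2$. First I record the support of the relevant $U_1$-values: $U_1(a_i, y) > 0$ only when $y = c^b_q$ with $\{a_i, b\} \subseteq C_q$, while $U_1(c^{a_i}_l, y) > 0$ only when $y = c^b_l$ with $b \in C_l$; in both cases the value is $2$ if $b = a_i$ and $1$ otherwise. Since $V_1(a_i, \cdot) = 0$ and, because $C_l \notin \sss$, also $V_1(c^{a_i}_l, \cdot) = 0$, the dominated side contributes only its $U_1$-value. On the dominating side, $U_1(c^{a_i}_j, y) > 0$ only at $y = c^b_j$ with $b \in C_j$, where it is at least the dominated value (equal, in the case of $a_i$); at every profile $y = c^b_q$ with $q \neq j$ on which the dominated strategy is positive we have $U_1(c^{a_i}_j, y) = 0$, and the gap is closed by $V_1(c^{a_i}_j, y) = \infty$. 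All remaining profiles give $0$ on both sides, and strictness follows from any witness $y = c^{a_i}_p$ with $a_i \in C_p$ and $C_p \neq C_j$ (such $C_p$ exists since $a_i$ lies in exactly three sets, only one being $C_j$).

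For player $p_2$ the computation is cleaner. I would observe that $U_2(x, c^{a_i}_j) = U_2(x, a_i) = U_2(x, c^{a_i}_l)$ for every $x \in X_1$: each equals $1$ exactly when $x = c^{a_{i-1}}_p$ for some $C_p \ni a_{i-1}$, and $0$ otherwise. Hence under $U_2$ alone $c^{a_i}_j$ ties the other two everywhere, and $V_2$ breaks the tie in the desired direction: $V_2(x, a_i) = V_2(x, c^{a_i}_l) = 0$ (the first since its second argument is not a $c$-strategy, the second since $C_l \notin \sss$), whereas $V_2(x, c^{a_i}_j) = \infty$ whenever $x = c^{a_{i-1}}_p$ with $C_p \notin \sss$. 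Thus $[U_2 + V_2](\cdot, c^{a_i}_j)$ weakly dominates both, with strictness witnessed by any $C_p \ni a_{i-1}$ outside $\sss$, which exists by the same three-set membership property.

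The step I expect to require the most care is the $p_1$ case, specifically confirming that the support of $V_1(c^{a_i}_j, \cdot)$ covers \emph{precisely} the profiles where $a_i$ or $c^{a_i}_l$ strictly beats $c^{a_i}_j$ under $U_1$ alone; concretely, that $V_1(c^{a_i}_j, c^b_q) = \infty$ for every $q \neq j$ with $a_i \in C_q$ and every $b \in C_q$. This is exactly where the exact-cover hypothesis is used — $C_j$ is the unique cover set of $a_i$, and $a_i \in C_j \cap C_q$ with $C_j \in \sss$ — and it is the place where the index conditions defining $V_1$ must be read carefully. Finally, the value $\infty$ is a stand-in for a sufficiently large finite payment; any large value suffices for domination, and reconciling its placement with $\cost(\vvv) = 0$ is deferred to the companion claim.
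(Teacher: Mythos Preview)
Your proposal is correct and follows essentially the same approach as the paper: a pointwise verification of domination over the opponent's strategy set, with the key observation for $p_2$ that $U_2(x,\cdot)$ is constant across $a_i$, $c^{a_i}_j$, and $c^{a_i}_l$. The only organizational difference is that for $p_1$ the paper first argues $a_i$ dominates $c^{a_i}_l$ in $G[\vvv]$ (since neither receives any payment) and then uses transitivity, so it only needs to check $c^{a_i}_j$ versus $a_i$ in its six-case breakdown; you instead handle both dominated strategies directly via the support analysis, which is equally valid and arguably cleaner.
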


\begin{proof}[Proof of \cref{obs:2p1}]
Observe that for $p_1$, $a_i$ dominates $c^{a_i}_j$ for every $C_j \in \ccc$ where $a_i \in C_j$ in $G$. Since $\sss$ is an exact cover, no $C_l \in \ccc \setminus \{C_j\}$ such that $a_i \in C_l$ is in $\sss$. Therefore payment promise $V_1$ is $0$ when $p_1$ plays $c^{a_n}_l$. Thus $a_i$ also dominates $c^{a_i}_l$ in $G[\vvv]$. Because dominance is transitive (\cref{obs:transitive}), it is enough to show that $c^{a_i}_j$ dominates~$a_i$.

To see that $c^{a_i}_j$ dominates $a_i$ for $p_1$, we show that the payoff for playing $c^{a_i}_j$ is always higher than or equal to that of playing $a_i$.
\begin{description}

\item[Case 1:] \textbf{$p_2$ plays $c^{a_i}_j$.}

Then, $[U_i + V_i](c^{a_i}_j, c^{a_i}_j) = 2 =  [U_i + V_i](a_i, c^{a_i}_j) $.
\item[Case 2:] \textbf{$p_2$ plays $c^{a_i}_l$ for some $C_l \in \ccc \setminus \{C_j\}$ such that $a_i \in C_l$.}

Then, since $a_i$ is covered by $C_j$, we know that $C_l \notin \sss$. Therefore, $[U_1 + V_1](c^{a_i}_j, c^{a_i}_l) = \infty > 2 = [U_1 + V_1](a_i, c^{a_i}_l)$.
\item[Case 3:] \textbf{$p_2$ plays $c^{a_n}_j$ for some $a_n \in C_j \setminus \{a_i\}$.}

Then, $[U_1 + V_1](c^{a_i}_j, c^{a_n}_j) = 1 = [U_1 + V_1](a_i, c^{a_n}_j) $.
\item[Case 4:] \textbf{$p_2$ plays $c^{a_n}_l$ for some $ a_n \in \aaa \setminus \{a_i\}, C_l \in \ccc \setminus \{C_j\}$ where $a_i, a_n \in C_l$}.

Then, since element~$a_i$ is covered by $C_j$ we know that $C_l \notin \sss$. Therefore $[U_1 + V_1](c^{a_i}_j, c^{a_n}_l) = \infty > 1 = [U_1 + V_1](a_i, c^{a_n}_l)$.
\item[Case 5:] \textbf{$p_2$ plays $c^{a_n}_l$ for some $a_n \in \aaa \setminus \{a_i\}, C_{l} \in \ccc \setminus \{C_j\}$, where $a_i \notin C_{l}, a_n \in C_{l}$.}

Then, $[U_1 + V_1](c^{a_i}_j, (a_n, C_l)) = 0 = [U_1 + V_1](a_i, (a_n, C_l))$.
\item[Case 6:] \textbf{$p_2$ plays $a_n \in \aaa$ ($a_i, a_n$ not necessarily distinct).}

Then, $[U_1 + V_1](c^{a_i}_j, a_n) = 0 = [U_1 + V_1](a_i, a_n) $
\end{description}
In all cases, the new utility of $c^{a_i}_j$ is higher than or equal to the utility of $a_i$, and in the Cases 2 and 4 the utility is strictly higher. Thus $c^{a_i}_j$ dominates $a_i$ for $p_1$.
\ifshort The analogous observation for $p_2$ is proved in the full version of the paper.\fi
\appendixproof{obs:2p1}{

Let $a_i \in \aaa$. Observe that in $G$, $p_2$ obtains the same utility for every strategy relating to $a_i$. Formally, for every $s, t \in A^*_i \coloneqq \{a_i\} \cup \{c^{a_i}_j \mid C_j \in \ccc, a_i \in C_j\}$, $U_2(x, s) = U_2(x,t)$ where $x \in X_1$ is an arbitrary strategy. The payment promise $V_2$ is non-zero only for $c^{a_i}_j$, where $C_j \in \sss$. Since $\sss$ is an exact cover, no $C_l \in \ccc \setminus \{C_j\}$ such that $a_i \in C_l$ is in $\sss$. Therefore payment promise $V_2$ is $0$ for every strategy in $a^*_i \in A^*_i \setminus \{c^{a_i}_j\}$. It follows that for every $x, y \in A^*_i \setminus \{c^{a_i}_j\}$, $U_2(x, s) = U_2(x,t)$ for every $x \in X_1$.

Thus, to show case for $p_2$, it suffices to show that $c^{a_i}_j$ dominates an arbitrary $a^*_i \in A^*_i \setminus \{c^{a_i}_j\}$.

\begin{description}
  \item[Case 1:] \textbf{$p_1$ plays $c^{a_{i - 1}}_p$ for some $C_p \in \sss$ such that $a_{i -1} \in C_p$.}
 Then, $[U_2 + V_2](c^{a_{i - 1}}_p, c^{a_i}_j) = 1 = [U_2 + V_2](c^{a_{i - 1}}_p, a^*_i)$
\item[Case 2:] \textbf{$p_1$ plays $c^{a_{i - 1}}_p$ for some $C_p \in \ccc \setminus \sss$ such that $a_{i -1} \in C_p$.}
Then, 
$[U_2 + V_2](c^{a_{i - 1}}_p, c^{a_i}_j) = \infty > 1 = [U_2 + V_2](c^{a_{i - 1}}_p, a^*_i)$
\item[Case 3:] \textbf{$p_1$ plays $c^{a_n}_p$ for some $a_n \in C_p \setminus \{a_{i-1}\} \in \ccc$.}
If $C_p \notin \sss$, $V_2(c^{a_n}_p, c^{a_i}_j) = \infty$, otherwise $V_2(c^{a_n}_p, c^{a_i}_j) = 0$.

In both cases, $[U_2 + V_2](c^{a_n}_p, c^{a_i}_j) \geq 0 = [U_2 + V_2](c^{a_n}_p, a^*_i)$
\item[Case 4:] \textbf{$p_1$ plays some $a_j \in \aaa$.}

Then, $[U_2 + V_2](a_j, c^{a_i}_j) = 0 = [U_2 + V_2](a_j, a^*_i)$.
\end{description}
In all above cases, the new utility of $c^{a_i}_j$ is higher than the utility of $a^*_i$, and in case 2 the utility of $c^{a_i}_j$ is strictly higher. Thus $c^{a_i}_j$ dominates every $a^*_i \in A^*_i$. This concludes the proof.
}%
\end{proof}

Since the set of undominated strategies is non-empty (\cref{obs:undom_noempty}), \cref{obs:2p1} implies that for every $i \in [2]$, $X^{\star}_i \subseteq \{c^{a_i}_j \mid C_j \in \sss, a_i \in C_j\}$ as all other strategies are dominated. Therefore $\xxx^{\star} \subseteq \{c^{a_i}_j \mid C_j \in \sss, a_i \in C_j\} \times  \{c^{a_i}_j \mid C_j \in \sss, a_i \in C_j\} \subseteq \ooo$, as required. For every $j \in [2]$, we have that $V_j(s_1, s_2) > 0$ only when $s_1$ or $s_2$ is in $\{c^{a_i}_j \mid C_j \notin \sss, a_i \in C_j\}$.
Since none of these strategies is in $X^{\star}_{j'}$, we have that $\cost(\vvv) = \max_{x \in \xxx^{\star}_{G[\vvv]}}\sum_{i \in [2]}V_i(x) = 0$, as required.\\

For the backwards direction, assume that we have a payment promise $\vvv$ such that $\cost(\vvv) = 0$ and $\xxx^{\star} \subseteq \ooo$ in the modified game $G[\vvv]$.

\begin{claim}\label{cla:2p1} Let $C_j \in \ccc, a_i \in \aaa$. If $c^{a_i}_j \in X^{\star}_2$, then in $G[\vvv]$
\begin{compactenum}[(i)]
\item $c^{a_i}_j \in X^{\star}_1$,\label{l:2p1:0}
\item $c^{a_i}_j$ dominates $a_i$ for $p_1$ \label{l:2p1:1},
\item $c^{a_i}_l \notin X^{\star}_2$ where $ C_l \in \ccc \setminus \{C_j\}$, $a_i \in C_l$. \label{l:2p1:2}
\end{compactenum}
\end{claim}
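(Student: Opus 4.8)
The plan is to lean on two immediate consequences of the hypotheses. First, since the budget is $0$ and payment promises are non-negative, for every undominated profile $x \in \xxx^{\star}_{G[\vvv]}$ the inequality $\sum_{i} V_i(x) \le 0$ forces $V_1(x) = V_2(x) = 0$; in words, \emph{all promised payments vanish on undominated profiles}. Second, since $\xxx^{\star}_{G[\vvv]} \subseteq \ooo$ and $\xxx^{\star}$ is non-empty (\cref{obs:undom_noempty}), no strategy $a_i \in \aaa$ is undominated for either player, so every such $a_i$ is dominated. I would combine these with one base-game observation: the maximum value of $U_1$ is $2$, and against $p_2$ playing $c^{a_i}_j$ the payoff $2$ is attained by only two strategies of $p_1$, namely $a_i$ and $c^{a_i}_j$ (all other rows give $0$ or $1$ in that column).

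I would then prove the three items in the order (i), (ii), (iii). For the first item, suppose towards a contradiction that $c^{a_i}_j$ is dominated for $p_1$; by \cref{obs:undom_dom} it is dominated by some undominated $s \in X^\star_1$. Evaluating the domination inequality at the column where $p_2$ plays $c^{a_i}_j$ — which lies in $X^\star_2$ by assumption, so the profile $(s, c^{a_i}_j)$ is undominated and carries zero payment — gives $U_1(s, c^{a_i}_j) = [U_1 + V_1](s, c^{a_i}_j) \ge [U_1 + V_1](c^{a_i}_j, c^{a_i}_j) = 2 + V_1(c^{a_i}_j, c^{a_i}_j) \ge 2$. Since $2$ is the maximum of $U_1$ and is attained in this column only by $a_i$ and $c^{a_i}_j$, we get $s \in \{a_i, c^{a_i}_j\}$, contradicting that $s$ is undominated (hence $s \neq a_i$) and distinct from $c^{a_i}_j$. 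The second item uses the very same evaluation, now starting from the fact that $a_i$ is dominated: a dominating undominated strategy $s$ must satisfy $U_1(s, c^{a_i}_j) \ge 2 + V_1(a_i, c^{a_i}_j) \ge 2$, so again $s \in \{a_i, c^{a_i}_j\}$, and as $s \neq a_i$ we conclude $s = c^{a_i}_j$; that is, $c^{a_i}_j$ dominates $a_i$ for $p_1$.

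For the third item, suppose for contradiction that $c^{a_i}_l \in X^\star_2$ for some $C_l \in \ccc \setminus \{C_j\}$ with $a_i \in C_l$. By the first item $c^{a_i}_j \in X^\star_1$, so the profile $(c^{a_i}_j, c^{a_i}_l)$ is undominated and all payments on it vanish. Evaluating the domination of $a_i$ by $c^{a_i}_j$ (the second item) at the column $c^{a_i}_l$ then yields $0 = U_1(c^{a_i}_j, c^{a_i}_l) + V_1(c^{a_i}_j, c^{a_i}_l) \ge U_1(a_i, c^{a_i}_l) + V_1(a_i, c^{a_i}_l) \ge 2$, a contradiction; here $U_1(c^{a_i}_j, c^{a_i}_l) = 0$ because the sets $C_j, C_l$ differ, whereas $U_1(a_i, c^{a_i}_l) = 2$.

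The crux — and the step I expect to be most delicate — is the vanishing property and its disciplined use: payments are completely unconstrained on profiles that are \emph{not} undominated (in particular on every profile in which a player plays some $a_i$), so the argument must only ever read off inequalities at profiles that are genuinely undominated, where the budget forces the payments to be $0$. Selecting the witness column $c^{a_i}_j$ (respectively $c^{a_i}_l$), which is undominated for $p_2$ by hypothesis and where $p_1$'s base payoff is maximal, is precisely what makes the constraints bite. Non-negativity of the payments is essential here: were negative promises allowed, one could drive the $a_i$-entries very low to make every $a_i$ dominated without corresponding to any exact cover, so the proof crucially uses that a zero budget forces each payment on an undominated profile to be exactly zero.
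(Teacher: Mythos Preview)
Your proof is correct and follows essentially the same route as the paper's. Both rely on evaluating the domination inequalities at the column $c^{a_i}_j$ (resp.\ $c^{a_i}_l$), using that the budget is~$0$ and payments are non-negative to strip the $V_1$ term on undominated profiles, and then identifying the dominator via the fact that only $a_i$ and $c^{a_i}_j$ attain payoff~$2$ in that column. The only cosmetic difference is that the paper bundles~(i) and~(ii) into one step---it starts from ``$a_i$ must be dominated by some undominated $s$'' and concludes $s=c^{a_i}_j$, obtaining both at once---whereas you prove~(i) via a separate contradiction argument and then~(ii); the underlying inequalities are identical.
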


\begin{proof}[Proof of \cref{cla:2p1}]
We first show \eqref{l:2p1:0}.
Since $\vvv$ implements~$\ooo$, by \cref{obs:undom_dom} there is an undominated strategy that dominates $a_i$ for $p_1$.
For a strategy $s \in X^{\star}_1$ to dominate $a_i$ for $p_1$ we need that $V_1(s, c^{a_i}_j) \geq U_1(a_i, c^{a_i}_j)  - U_1(s, c^{a_i}_j)= 2 - U_1(s, c^{a_i}_j)$. Because $s$ is undominated, $(s, c^{a_i}_j) \in \xxx^{\star}_{G[\vvv]}$. By the definition of \cost, we have that $0  \geq \cost(\vvv) = \max_{x \in \xxx^{\star}_{G[\vvv]}} \sum_{i \in [2]}V_i(x) \geq V_1(s, c^{a_i}_j)$. By combining these, we obtain that $U_1(s, c^{a_i}_j) \geq 2$. The only strategy for $p_1$ that satisfies this condition is $c^{a_i}_j$. Since $c^{a_i}_j$ dominates $a_i$, \eqref{l:2p1:1} follows directly.

To prove \eqref{l:2p1:2}, assume that both $c^{a_i}_j$ and $c^{a_i}_l$ are undominated for $p_2$.
By \eqref{l:2p1:0} and \eqref{l:2p1:1}, strategy $c^{a_i}_j$ dominates $a_i$ for $p_1$ and $c^{a_i}_j$ is undominated. Thus $[U_1 + V_1](c^{a_i}_j, c^{a_i}_l) \geq U_1(a_i, c^{a_i}_l) = 2$.
From $U_1(c^{a_i}_j, c^{a_i}_l) = 0$ it follows that $ V_1(c^{a_i}_j, c^{a_i}_l) \geq 2$.
But since $c^{a_i}_j$ is undominated for $p_1$ and $c^{a_i}_l$ for $p_2$, $(c^{a_i}_j,c^{a_i}_l) \in \xxx^{\star}$ and thus $cost(\vvv) = \max_{x \in \mathcal{X}^{\star}} \sum_{i \in [2]} V_i(x) \geq V_1(c^{a_i}_j, c^{a_i}_l) = 2$ which a contradiction to $\cost(V) = 0$.
\end{proof}

\begin{claim}\label{cla:2p2}Let $C_j \in \ccc, a_i \in \aaa$. If $c^{a_i}_j \in X^{\star}_1$, then $c^{a_{i + 1}}_l \in X^{\star}_2$ for some $C_l \in \ccc$ such that $a_{i+1} \in C_l$. \end{claim}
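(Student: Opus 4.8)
The plan is to argue by contradiction, exploiting the cyclic ``one-ahead'' structure of $p_2$'s utilities that was flagged in the intuition. Working under the standing assumptions of the backward direction (the budget is $0$, payments are non-negative, and $\xxx^{\star} \subseteq \ooo$), I would assume that \emph{no} strategy related to $a_{i+1}$ is undominated for $p_2$, i.e.\ $c^{a_{i+1}}_l \notin X^{\star}_2$ for every $C_l \in \ccc$ with $a_{i+1} \in C_l$, and then exhibit a payment that is forced to be negative.

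First, fix any set $C_l \in \ccc$ with $a_{i+1} \in C_l$ (such sets exist, as every element lies in exactly three sets). By the contradiction hypothesis $c^{a_{i+1}}_l$ is dominated for $p_2$, so by \cref{obs:undom_dom} it is dominated by some undominated strategy $z \in X^{\star}_2$. Since $\vvv$ implements $\ooo$ we have $\xxx^{\star} \subseteq \ooo$, and because undominated strategy sets are non-empty (\cref{obs:undom_noempty}) this gives $X^{\star}_2 \subseteq O_2$; hence $z = c^{a_m}_q$ for some $a_m \in \aaa$ and $C_q \in \ccc$. The contradiction hypothesis forces $m \neq i+1$.

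The crucial step evaluates the domination of $c^{a_{i+1}}_l$ by $z$ at the single opponent strategy $x = c^{a_i}_j$, which is legitimate precisely because $c^{a_i}_j \in X^{\star}_1$ by hypothesis, giving $[U_2 + V_2](c^{a_i}_j, z) \geq [U_2 + V_2](c^{a_i}_j, c^{a_{i+1}}_l)$. On the left, the profile $(c^{a_i}_j, z)$ consists of two undominated strategies, so exactly as in the proof of \cref{cla:2p1} the zero budget together with non-negativity of the payments makes both payments vanish there; moreover $U_2(c^{a_i}_j, z) = 0$, since $p_1$ playing an $a_i$-strategy rewards $p_2$ with utility $1$ only on $a_{i+1}$-strategies (the rule $U_2(c^{a_{i-1}}_p, c^{a_i}_j) = 1$ read with the index shifted by one), while $z$ is an $a_m$-strategy with $m \neq i+1$. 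On the right, $U_2(c^{a_i}_j, c^{a_{i+1}}_l) = 1$ because $a_i \in C_j$ and $a_{i+1} \in C_l$. The inequality thus reduces to $0 \geq 1 + V_2(c^{a_i}_j, c^{a_{i+1}}_l)$, i.e.\ $V_2(c^{a_i}_j, c^{a_{i+1}}_l) \leq -1$, contradicting non-negativity. Hence some $c^{a_{i+1}}_l$ is undominated for $p_2$.

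I expect the main obstacle to be bookkeeping the cyclic index arithmetic correctly: verifying that ``$p_1$ plays an $a_i$-strategy'' rewards $p_2$ \emph{only} on $a_{i+1}$-strategies, so that $U_2(c^{a_i}_j, z) = 0$ while $U_2(c^{a_i}_j, c^{a_{i+1}}_l) = 1$, and ruling out that the dominating strategy $z$ is itself $a_{i+1}$-related. The remaining subtlety is to invoke the zero-payment-at-undominated-profiles fact consistently with its use in \cref{cla:2p1}; beyond that the argument is a one-line computation.
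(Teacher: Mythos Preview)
Your argument is correct and rests on the same mechanism as the paper's: evaluate the domination inequality for player~$p_2$ at the opponent strategy $c^{a_i}_j \in X^{\star}_1$, and combine the zero-budget constraint with non-negativity of payments to force the dominating strategy to be $a_{i+1}$-related.

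The only difference is a minor structural one. The paper argues directly rather than by contradiction, and it analyses the domination of the \emph{element} strategy $a_{i+1}$ (which is outside $O_2$ and hence must be dominated by some $s \in X^{\star}_2$) instead of a color-choice strategy $c^{a_{i+1}}_l$. Since $U_2(c^{a_i}_j, a_{i+1}) = 1$ as well, the paper obtains $U_2(c^{a_i}_j, s) \geq 1$ in one step and concludes $s = c^{a_{i+1}}_l$ for some $l$. This avoids the contradiction wrapper and the need to rule out that the dominator $z$ is itself $a_{i+1}$-related, but the substance of both arguments is identical.
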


\begin{proof}[Proof of \cref{cla:2p2}]
Since $\vvv$ is a valid solution, by \cref{obs:undom_dom}, there is an undominated strategy $s \in X^{\star}_2$ that dominates $a_{i+1}$ for $p_2$. For $s$ to dominate $a_{i + 1}$ we need that $[U_2 + V_2](c^{a_i}_j, s) \geq U_2(c^{a_i}_j, a_{i + 1}) = 1$.

Because $s$ is undominated, $(c^{a_i}_j, s) \in \xxx^{\star}_{G[\vvv]}$. Therefore $0  \geq \cost(\vvv) = \max_{x \in \xxx^{\star}_{G[\vvv]}} \sum_{i \in [2]}V_i(x) \geq V_2(c^{a_i}_j, s)$. From this follows that $U_2(c^{a_i}_j, s) \geq 1$. The only strategies for $p_2$ that satisfy this condition are $c^{a_{i + 1}}_l$ where $C_l \in \ccc$ such that $ a_{i + 1} \in C_l$.
\end{proof}

\looseness=-1
We know from the definition of valid implementation that $X^{\star}_2 \neq \emptyset$. Therefore there is some $C_j \in \ccc, a_i \in C_j$ such that $c^{a_i}_j \in X^{\star}_2$. By \cref{cla:2p1}\eqref{l:2p1:0} $c^{a_i}_j \in X^{\star}_1$. By \cref{cla:2p2} $c^{a_{i + 1}}_p \in X^{\star}_2$ for some $C_p \in \ccc$ such that $a_{i+1} \in C_p$. By repeating this argumentation $3\enn$ times, we obtain that for every $a_{i'} \in \aaa$, there is $C_{j'} \in \ccc$ such that $a_{i'} \in C_{j'}$, $c^{a_{i'}}_{j'} \in X^{\star}_2$.

This shows that $\sss \coloneqq \{C_j \mid C_j \in \ccc, \exists~a_i \in \aaa, \text{ s.t. } c^{a_i}_j \in  X^{\star}_2 \}$ covers $\aaa$.
To show $\sss$ is an exact cover, we must show that if $C_j \in \sss$ and $a_i \in C_j$ then for every $C_l \in \ccc \setminus \{C_j\}$ such that $a_i \in C_l$, we have that $C_l \notin \sss$.

Assume, towards a contradiction, that some $a_i \in \aaa$ is covered twice, i.e., there are $C_j, C_l \in \sss$ where $a_i \in C_j \cap C_l$.  By \cref{cla:2p1}\eqref{l:2p1:2} we cannot have both $c^{a_i}_j \in X^{\star}_2$ and $c^{a_i}_l \in X^{\star}_2$. Therefore, without loss of generality, assume that $c^{a_i}_j \in X^{\star}_2$ and $c^{a_n}_l \in X^{\star}_2$ for some $a_n \in C_l \setminus \{a_i\}$. Then by \cref{cla:2p1}\eqref{l:2p1:1} $c^{a_i}_j$ dominates $a_i$ for $p_1$. Moreover, $[U_1 + V_1](c^{a_i}_j, c^{a_n}_l) \geq U_1(a_i, c^{a_n}_l) = 1$ because $a_i, a_j \in C_l$.

\looseness=-1
Because $U_1(c^{a_i}_j, c^{a_n}_l) = 0$, we have $V_1(c^{a_i}_j, c^{a_n}_l) \geq 1$.
By \cref{cla:2p1}\eqref{l:2p1:0} we have that $c^{a_i}_j \in X^{\star}_1$ and we have assumed that $c^{a_n}_l \in X^{\star}_2$. Thus $\cost(\vvv) \geq V_1(c^{a_i}_j, c^{a_n}_l) \geq 1 > 0$, a contradiction. Therefore each element $a_i \in \aaa$ is covered exactly once, and $\sss$ is an exact cover.
This finishes the proof of \cref{thm:2p}.

\section{\pHG\ is NP-hard for Max.\ Degree Three and Two Strategies}
\label{sec:nph-maxdeg-3}
\appendixsection{sec:nph-maxdeg-3}

In all earlier reductions, the number of strategies per player has been unbounded. In this section we show that in graphical games even bounding the number of strategies and the degree of players together does not help to lower the complexity.\ifshort\begin{theorem}[\appsymb]\else\begin{theorem}\fi\label{thm:dg3_2s}
  \pHG\ is NP-hard, even on graphical games of degree three, where each player has at most two strategies.
\end{theorem}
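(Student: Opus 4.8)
The plan is to give a polynomial-time reduction from a suitably restricted but still NP-hard constraint satisfaction problem whose native structure already matches a bounded-degree binary setting. A natural choice is \probname{3-SAT} together with the standard variable-copy trick: replace each occurrence of a variable by a private copy, arrange all copies of one variable in a cycle whose consecutive pairs are tied together by an \emph{equality gadget}, and let each copy feed into exactly the one clause in which it occurs. In the resulting graphical game every player is incident to at most three others---a clause player is adjacent to its three literal copies, and a copy player is adjacent to its two cycle neighbors and its single clause---so the degree-three bound holds, and every player has exactly two strategies, standing for the two truth values. The desired sets $\ooo$ will be rectangular, consistent with the remark that all hardness results hold for rectangular $\ooo$.

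The heart of the construction is to realize two gadgets inside the \pHG\ semantics: an equality gadget that forces two adjacent binary players to carry the same undominated value in every within-budget implementation, and a \emph{clause gadget} that admits a within-budget implementation if and only if at least one of its three literal inputs is set to the satisfying value. As in \cref{thm:2p}, the driving mechanism is the cost bound: to make the non-desired strategy of a player dominated one must promise a payment that becomes payable precisely on profiles of \emph{undominated} neighbor strategies, so the budget constraint couples a player's choice to the choices of its neighbors. I would engineer the base utilities so that, for each copy player, dominating the wrong value forces a unit payment on exactly those neighbor profiles corresponding to a disagreement with an adjacent copy or to an unsatisfied clause; a budget that tolerates consistent, satisfying assignments but not disagreements then enforces equality around each cycle and satisfaction at each clause.

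With the gadgets in place, correctness splits into the two standard directions. For the forward direction I would take a satisfying assignment, fix each player's undominated value accordingly, and exhibit a payment promise~$\vvv$ that dominates every wrong strategy while leaving all payable profiles at total cost within the budget---verified by a case analysis analogous to that of \cref{obs:2p1}. For the backward direction I would start from any $\vvv$ with $\cost(\vvv)$ within budget and $\xxx^{\star}_{G[\vvv]}\subseteq\ooo$, and argue, using \cref{obs:undom_dom} and \cref{obs:undom_noempty}, that each player has a well-defined undominated value, that the equality gadgets propagate a single value around each variable cycle (yielding a consistent truth assignment), and that each within-budget clause gadget certifies that its clause is satisfied---mirroring the roles of \cref{cla:2p1,cla:2p2}.

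I expect the main obstacle to be the gadget design itself, and in particular forcing a binary player to possess a \emph{unique} undominated strategy while ruling out the degenerate case in which both strategies remain undominated (which would leave a variable's value undefined and might let an adversarial payer sidestep a constraint). Controlling this with only two strategies per player, degree three, and the \emph{global} worst-case nature of $\cost(\vvv)$---rather than per-profile payment caps---is the delicate point, and it is where the bulk of the case analysis and the precise choice of utility values and budget~$\delta$ will be spent.
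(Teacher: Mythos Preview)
Your plan is a plausible route, but it is genuinely different from the paper's and it leaves the crucial part unfinished. The paper does \emph{not} go through \probname{3-SAT} with variable cycles and equality gadgets; it reduces from \pXTC, which already has a bipartite incidence structure of degree three on both sides. Players are the elements and the sets, each with strategies $\{T,F\}$; set players have $O_{C_j}=X_{C_j}$ (both strategies desired), so the implementer may freely pick which of $T_{C_j},F_{C_j}$ is undominated via a tiny tie-breaking payment, and this choice \emph{is} the encoding of whether $C_j$ is in the cover. Element players have $O_{a_i}=\{T_{a_i}\}$ and a utility that gives $F_{a_i}$ payoff~$1$ exactly on neighbor profiles in which not exactly one incident set plays~$T$; hence dominating $F_{a_i}$ within a sub-unit budget is possible iff the undominated set choices form an exact cover. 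No equality gadget is needed at all, and the uniqueness issue you flag (both strategies of a ``choice'' player remaining undominated) is dispatched by a short argument: if some $C_j$ had both $T_{C_j}$ and $F_{C_j}$ undominated, combining this with the already-forced values of the other two sets incident to some $a_i\in C_j$ yields an undominated profile that violates the element constraint.

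The gap in your proposal is that the whole difficulty is pushed into the gadgets, which you do not construct. Two concrete issues. First, your phrase ``dominating the wrong value'' for a copy player suggests each copy has a single desired strategy, but then $\ooo$ already fixes the truth assignment and there is nothing for the implementer to choose; conversely, if copies have $O=\{T,F\}$, you need a separate constraint player to detect disagreement, and you have not shown how to insert equality checkers while keeping every copy at degree three (two cycle contacts plus one clause already uses the budget). Second, you must simultaneously force every copy to have a \emph{unique} undominated strategy; with $O=\{T,F\}$ and a global worst-case cost, nothing in your sketch prevents the implementer from leaving both strategies undominated at some copies, which decouples the cycle and lets clauses be satisfied spuriously. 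The paper's \pXTC\ route sidesteps both problems: the ``choice'' players are exactly the set players with $O=X$, the constraint players are the element players with a single desired strategy, and the exactly-one-of-three utility both enforces the combinatorial condition and, via a two-line claim, rules out the two-undominated case.
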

\begin{proof}%

To show NP-hardness we reduce from \pXTC.
Let $(\enn, \aaa, \ccc)$ be an instance of \pXTC.
We construct an instance $(N, \xxx, \uuu, \ooo, \delta)$ of graphical \pHG\ in the following way: 
Let $N \coloneqq \aaa \cup \ccc$ be the set of players, let $X_i=\{T_i, F_i\}$ be the set of strategies for player $i \in N$. 

Construct the underlying graph $H \coloneqq (N, E)$, where $E \coloneqq \{\{a_i, C_j\} \mid C_j \in \ccc, a_i \in C_j\}$. It is easy to see that $H$ has degree 3. Throughout this proof, for an element $a_i \in \aaa$, let us denote the sets that include it as $C_i^1, C_i^2, C_i^3$ in an arbitrary but fixed order. When defining utility functions, if the utility of the player does not depend on the strategy played by some other player, the strategy of this player is omitted from the function arguments.

For each element $a_i \in \aaa$, we define
\begin{alignat*}{3}
&U_{a_i}(T_{C_i^1}, T_{C_i^2}, T_{C_i^3}, F_{a_i})  &= & U_{a_i}(F_{C_i^1}, F_{C_i^2}, F_{C_i^3}, F_{a_i})\\
={} &U_{a_i}(T_{C_i^1}, T_{C_i^2}, F_{C_i^3}, F_{a_i}) &= & U_{a_i}(T_{C_i^1}, F_{C_i^2}, T_{C_i^3},F_{a_i})\\ 
={} &U_{a_i}(F_{C_i^1}, T_{C_i^2}, T_{C_i^3}, F_{a_i}) &= & 1
\end{alignat*}
The undefined combinations pay out $0$. For every $C_j \in \ccc$, $U_{C_j}$ is 0 for every strategy profile.
 
For every $a_i \in \aaa$ the set of desired outcomes is $O_{a_i} = \{ T_{a_i} \}$. For every $C_j \in \ccc$, $O_{C_j} = X_{C_j}$

It remains to show that there is an exact cover of $\aaa$ if and only if $(G, \ooo, \delta)$ is a positive instance of \pHG.
\ifshort See the full version.\fi
\appendixproof{thm:dg3_2s}{

For the forwards direction, assume that $(\enn, \aaa, \ccc)$ admits an exact cover $\sss$.

For each element $a_i \in \aaa$, we define
\begin{align*}
V_{a_i}(T_{C_i^1}, T_{C_i^2}, T_{C_i^3}, T_{a_i}) &= \infty,\\
V_{a_i}(T_{C_i^1}, T_{C_i^2}, F_{C_i^3}, T_{a_i})  &= \infty,\\
V_{a_i}(T_{C_i^1}, F_{C_i^2}, T_{C_i^3},T_{a_i}) &= \infty, \\
V_{a_i}(F_{C_i^1}, T_{C_i^2}, T_{C_i^3}, T_{a_i}) &= \infty,  \text{ and }\\
V_{a_i}(F_{C_i^1}, T_{C_i^2}, T_{C_i^3}, T_{a_i}) &= \infty.
\end{align*}

For each set $C_j \in \sss$, let $V_{C_j}(T_{C_j}) = \frac{\delta}{3n}$ and for $C_l \in \ccc \setminus \sss$, let $V_{C_l}(F_{C_l}) = \frac{\delta}{3n}$.

It is easy to see that for all $a_i \in \aaa$, $T_{a_i}$ dominates $F_{a_i}$ in $G[\vvv]$ and thus $X_{a_i}^{\star} = \{T_{a_i}\} = O_{a_i}$. For every $C_j \in \sss$, we have $X^{\star}_{C_j} = \{T_{C_j}\}$ and for every $C_l \in \ccc \setminus \sss$ we have $X^{\star}_{C_l} = \{F_{C_l}\}$. Because $O_{C_j} = X_{C_i}$ for every $C_j \in \ccc$, we trivially have that $X_{C_j}^{\star} \subseteq O_{C_j}$.

It remains to show that $\cost(\vvv) \leq \delta$.

Observe that for every $a_i \in \aaa$, $V_{a_i}(x)$ is 0 for every strategy profile $x$ in which exactly one of its set players $C^j_i$ plays~$T_{C^j_i}$.
Because $\sss$ is an exact cover, we have that for every $a_i \in \aaa$ exactly one $C_j \in \{C^1_i, C^2_i, C^3_i\}$ is in $\sss$ and has $T_{C_j}$ as its sole undominated strategy, whereas the others have the corresponding $F$-strategies as their sole undominated strategies. Therefore $V_{a_i}(x) = 0$ for every $x \in \xxx^{\star}$ and we get that  $\max_{x \in \xxx^{\star}}(\sum_{i \in N} V_i(x)) = \max_{x \in \xxx^{\star}}\sum_{C_j \in \ccc} V_{C_j}(x) \leq \sum_{C_j \in \ccc} \frac{\delta}{3n} = \delta$

Since for each player~$i$, the set of undominated strategies for him is non-empty and a subset of $O_i$ and we do not exceed the budget, we have a positive instance of \pHG.\\

Next we prove the backwards direction, namely that if $(G, \ooo, \delta)$ is a positive instance of \pHG, then $(\enn, \aaa, \ccc)$ admits an exact cover.

Assume $(G, \ooo, \delta)$ be a positive instance of \pHG, i.e., there exists a payoff function $\vvv$, s.t. in the game $((N, \xxx, \uuu+\vvv), H)$ the following holds:
\begin{align*}
    \emptyset \neq \xxx_i^* \subseteq O_i \quad & \forall i \in N, \text{ and }\\
    \max_{x \in \xxx^{\star}}\sum_{i \in N} V_i(x) \leq \delta &
\end{align*}

We prove this with the following two claims:
\begin{claim}\label{cla1_deg3} For every $a_i \in \aaa, x \in \xxx^{\star}_{G[\vvv]}$ exactly one $C_j \in \{C^1_i, C^2_i, C^3_i\}$ plays $T_{C_j}$ in $x$.
\end{claim}

\begin{proof}[Proof of \cref{cla1_deg3}]
  For $a_i \in \aaa$, strategy $T_{a_i}$ dominates $F_{a_i}$ in $G[\vvv]$.
  Every strategy profile $x_{-a_i} \in \xxx_{-a_i}$ that does not have exactly one $C_j \in \{C^1_i, C^2_i, C^3_i\}$ playing $T_{C_j}$, has that $U_{a_i}(x_{-a_i}, F_{a_i}) - U_{a_i}(x_{-a_i}, T_{a_i}) = 1$. Thus $V_{a_i}(x_{-a_i}, T_{a_i}) \geq 1 > \delta$ and therefore $(x_{-a_i}, T_{a_i}) \notin \xxx^{\star}$.
\end{proof}

\begin{claim}\label{cla2_deg3}For all $C_j \in \ccc$ either $X^{\star}_{C_j} = \{T_{C_j}\}$ or $X^{\star}_{C_j} = \{F_{C_j}\}$.
\end{claim}
\begin{proof}[Proof of \cref{cla2_deg3}]
Assume towards a contradiction that there is $ C_j \in \ccc \textrm{ s.t. } \xxx^{\star}_{C_j} = \{T_{C_j},F_{C_j} \}$.
Let $a_i \in C_j$.
Assume $x \in \xxx^{\star}$ is an undominated strategy profile where $C_j$ plays $T_{C_j}$.
Because $T_{C_j}$ is undominated and every player has at least one undominated strategy, such an $x$ exists.
By \cref{cla1_deg3}, we know that of the players $\{C^1_i, C^2_i, C_j\}$ (we assume, without loss of generality, that $C^3_i = C_j$) exactly one plays its $T$-strategy, so $C^1_i$ and $ C^2_i$ play their $F$-strategies.
Thus $F_{C^1_i} \in X^{\star}_{C^1_i}$ and $F_{C^2_i} \in X^{\star}_{C^2_i}$.
Since $C_j$ has both $T_{C_j}$ and $F_{C_j}$ in its set of undominated strategies, there is $x' \in \xxx^{\star}$ where $\{C^1_i, C^2_i, C_j\}$ play $\{F_{C^1_i}, F_{C^2_i}, F_{C_j} \}$, a contradiction to \cref{cla1_deg3}.
\end{proof}

We create a set cover $\sss = \{C_j \in \ccc \mid X^{\star}_{C_j} = \{T_{C_j}\}\}$. By Claims~\ref{cla1_deg3}--\ref{cla2_deg3}, for every $a_i \in \aaa$ there is exactly one $C_j \in \sss$ such that $a_i \in C_j$. Thus, $\sss$ is an exact cover of~$\aaa$.
}%
\end{proof}

\section{\pEHG\ is NP-hard}
\label{sec:exact-nph}
\appendixsection{sec:exact-nph}

The complexity of \pEHG\ has so far been open. In this section we show that it is NP-hard even when we have only two identical players.

\ifshort\begin{theorem}[\appsymb]\else\begin{theorem}\fi\label{thm:EHG-nphard}
  \pEHG\ is NP-hard, even for two players and rectangular desired strategy-profile sets.
\end{theorem}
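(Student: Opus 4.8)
The plan is to reduce from \pXTC, reusing the two-player game from the proof of \cref{thm:2p} as a scaffold and adding the control needed to pin down the undominated set \emph{exactly}. The first step is a convenient reformulation: for two players and a rectangular target $\ooo = O_1 \times O_2$, the undominated profiles always factor as $\xxx^{\star}_{G[\vvv]} = X_1^{\star} \times X_2^{\star}$, and both factors are non-empty by \cref{obs:undom_noempty}; hence for any yes-instance (where necessarily $O_1, O_2 \neq \emptyset$) the demand $\xxx^{\star}_{G[\vvv]} = \ooo$ is equivalent to $X_1^{\star} = O_1$ and $X_2^{\star} = O_2$. In words, for each player every desired strategy must remain undominated \emph{and} every undesired strategy must be dominated. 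This two-sided demand is exactly what separates \pEHG\ from \pHG: the construction of \cref{thm:2p} only certifies the inclusion $\xxx^{\star} \subseteq \ooo$, and there the realised set is in general a \emph{proper} subset of $\ooo$ --- one incidence strategy per element, i.e.\ $3\enn$ of the $9\enn$ strategies in $O_1$ --- so it cannot be reused verbatim.

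Concretely, I would keep the element strategies $a_i$ and the incidence strategies $c^{a_i}_j$ of both (identical) players together with the ``matching pays $2$, same-set neighbour pays $1$'' pattern of \cref{thm:2p}, which already (i) lets a chosen incidence strategy dominate its element strategy and (ii) enforces consistency between neighbouring elements through $p_2$'s shift-by-one utilities. To this I would add, for each desired strategy, a dedicated \emph{private} profile on which it is the unique strict maximiser for its owner; such a profile makes that strategy undominated no matter how the remaining payments are chosen, which is how I would discharge the new obligation $X_i^{\star} \supseteq O_i$ without disturbing the dominance analysis of \cref{thm:2p}. The rectangular target $\ooo$ and the budget $\delta$ are then fixed independently of any cover, and the cover itself is carried entirely by the payment promise $\vvv$ (pay $\infty$ on the incidences of the chosen sets, as in \cref{thm:2p}). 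The crux is to choose \emph{which} strategies are desired, and to tune the private profiles, so that they do not fall inside $\xxx^{\star}$ in a way that blows the budget, and so that conflicting choices (covering an element twice, or leaving one uncovered) are forced out of budget.

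For the forward direction I would take an exact cover $\sss$, define $\vvv$ as in \cref{thm:2p}, and then verify the two requirements into which exactness splits: that every undesired strategy is dominated by a desired one (the \cref{obs:2p1}-style computation, using that $\sss$ covers every element) and that every desired strategy is undominated while the cost stays within $\delta$ (this is where the private profiles and the fact that $\sss$ covers each element \emph{exactly once} are used, so that no second covering set ever pays on an undominated profile). For the backward direction I would start from a valid exact implementation, set $\sss \coloneqq \{C_j \mid c^{a_i}_j \in X_2^{\star} \text{ for some } a_i\}$, and argue as in \cref{cla:2p1,cla:2p2}: the equality now supplies the stronger hypothesis that \emph{every} desired strategy is undominated, from which coverage of all elements follows, while the budget-conflict argument of \cref{cla:2p1} forces each element to be covered at most once; together these make $\sss$ an exact cover.

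The main obstacle is making the equality $X_i^{\star} = O_i$ encode exactness of the cover rather than mere coverage under a tight budget. The subtle point is the interplay between the private profiles (which I add to keep desired strategies undominated) and the $\cost$ function (a maximum taken only over undominated profiles): I must ensure that enlarging $X_i^{\star}$ up to all of $O_i$ does not create new undominated profiles on which the domination payments of \cref{thm:2p} are charged, since that would break the budget whenever an element is covered twice. Getting the private-profile gadget to interact correctly with $p_2$'s shift utilities --- so that a merely covering (non-partitioning) selection is always infeasible while every exact cover is feasible --- is the delicate step; if a clean gadget proves elusive, the fallback is a direct reduction from the two-player \pHG\ instance of \cref{thm:2p}, padding each desired strategy with a private winning profile to turn the inclusion $\xxx^{\star} \subseteq \ooo$ into the equality $\xxx^{\star} = \ooo$.
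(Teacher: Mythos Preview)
Your plan has a genuine gap that cannot be patched by tuning gadgets. In the construction of \cref{thm:2p}, the exact cover is encoded precisely by \emph{which} incidence strategies land in $X_2^{\star}$: one $c^{a_i}_j$ per element, and \cref{cla:2p1}(iii) shows that two incidences of the same element cannot both be undominated within budget. Exactness with $O_2 =$ all incidence strategies forces \emph{every} $c^{a_i}_j$ into $X_2^{\star}$, which simultaneously wrecks both directions. In the backward direction your definition $\sss \coloneqq \{C_j \mid c^{a_i}_j \in X_2^{\star}\}$ returns all of $\ccc$, not a cover; worse, the argument of \cref{cla:2p1}(iii) still applies and shows that no exact implementation exists at all within the budget, so every instance becomes a no-instance. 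In the forward direction the payments of \cref{thm:2p} are deliberately placed on profiles where $p_2$ plays a \emph{non}-cover incidence; once those strategies are undominated (as exactness demands), the $\infty$ payments are charged and the cost explodes even for a valid partition. Your fallback --- generic padding of a \pHG\ instance with private winning profiles --- fails for the same structural reason: $\cost$ is a maximum over $\xxx^{\star}$, and enlarging $\xxx^{\star}$ from a proper subset of $\ooo$ to all of $\ooo$ can (and here does) raise the cost from $0$ to $\infty$.

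The paper avoids this by changing the encoding entirely. It reduces from \pTC\ with budget $\delta = 1$, takes $O_1 = O_2$ to be all colour-choice strategies $(v,c)$, and adds a distinct dummy strategy per colour-choice strategy so that every desired strategy is undominated \emph{for free}, making exactness equivalent to ordinary implementation. Crucially, the solution (a $3$-colouring $\phi$) is encoded not in the undominated set --- which is fixed to be $\ooo$ --- but in the \emph{domination relation}: $\phi(v) = c$ iff the strategy $(v,c)$ is the one that dominates the vertex strategy~$v$. The utilities are arranged so that (i) dominating $v$ forces choosing some $(v,c)$, (ii) the two players must agree on $c$, and (iii) adjacent vertices receiving the same colour would require paying both players $1$ on the same undominated profile, exceeding $\delta = 1$. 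This shift from ``which strategies are undominated'' to ``which desired strategy dominates each undesired one'' is exactly the idea your proposal is missing.
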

\begin{proof}%
  \looseness=-1
  We give a reduction from the NP-hard \pTC\ problem~\cite{GJ79} in which are a graph~$H$ and
  need to decide whether $H$ can be \emph{properly colored} with three colors.
  That is, whether we can assign each vertex exactly one color such that no two adjacent vertices receive the same color.
  
  Given an instance $H$ of \pTC\ we proceed as follows to construct an instance of \pEHG\ that consists of a game $G = (N, \mathcal{X}, \mathcal{U})$, a rectangular strategy profile set $\mathcal{O}$, and the real budget $\delta = 1$.
  There are two players, that is, $N = \{1, 2\}$.
  The sets of strategies of the two players are identical, that is, $\mathcal{X} = X_1 \times X_2$ and $X_1 = X_2$.
  Thus, we only describe $X_1$.
  For each vertex in $V(H)$ there is a corresponding strategy in $X_1$, that is, $V(H) \subseteq X_1$.
  We call these \emph{vertex strategies}.
  Furthermore, for each combination of a color $c \in [3]$ and a vertex $v \in V(H)$ there is a strategy $(v, c) \in X_1$.
  We call these \emph{color-choice strategies}, and we use $C$ to denote the set of color-choice strategies, that is, $C = \{ (v, c) \mid v \in V(H) \wedge c \in [3]\}$.
  Finally, we have a set $D$ of $3n$ \emph{dummy strategies}.
  Overall, $X_1 = X_2 = V(H) \cup C \cup D$.

  The strategies to implement are exactly the color-choice strategies, that is, $O_1 = O_2 = C$ and $\mathcal{O} = O_1 \times O_2$.
  Intuitively, the strategy $(v, c)$ in $O_1$ that dominates strategy $v$ after promising payments shall correspond to choosing color $c$ for vertex $v$.

  The utility functions are symmetric, that is, for each $x \in X_1 = X_2$ and $y \in X_2 = X_1$ we have $U_1(x, y) = U_2(y, x)$.
  Thus, we only describe $U_1$ explicitly.
  Moreover, we only give the non-zero values of $U_1$, all values not explicitly mentioned are 0.
  First, for each pair $(v, c_1), (v, c_2)$ of color-choice strategies that correspond to the same vertex $v \in V(H)$ we put \todoS{Hua recommended me not to use tuples as strategies, but rather something like $v^{c_1}$}

$    U_1((v, c_1), (v, c_2))  =
                              \begin{cases}
                                3,  c_1 = c_2 \\
                                2, &c_1 \neq c_2.
                              \end{cases}$
  Second, for each pair $(u, c_1), (v, c_2)$ of color-choice strategies that correspond to two adjacent vertices $u, v \in V(H)$ we put
    $U_1((u, c_1), (v, c_2))  = U_1((v, c_1), (u, c_2)) =
                              \begin{cases}
                                1,  c_1 = c_2 \\
                                2,  c_1 \neq c_2.
                              \end{cases}$
  Third, for each vertex strategy $v$ and each color-choice strategy $(v, c)$ corresponding to $v$ we put $U_1(v, (v, c)) = 3$.
  Fourth, for each vertex strategy $u$ and each color-choice strategy $(u, c)$ corresponding to a neighbor $u \in \ngb_H(v)$ of $v$ we put $U_1(v, (u, c)) = 2$.
  Finally, for each color-choice strategy $x \in X_1$ we pick a distinct dummy strategy $y \in X_2$ \todoS{$X_2 \to D_2$?} and put $U_1(x, y) = 1$. \todoS{I would create sets for vertex and color selector strategies and use those instead of generic $X_1$.}
  This concludes the description of the \pEHG\ instance $\mathcal{I} = (G, \mathcal{O}, \delta)$ where $G = (N, \mathcal{X}, \mathcal{U})$ and $\delta = 1$.

  Intuitively, the players are highly incentivized to play vertex strategies because of the values $U_1(v, (v, c)) = 3$.
  In order to dominate a vertex strategy~$v$, we need to pick a color-choice strategy corresponding to that vertex $v$ because in order to make a different strategy dominate~$v$ we would exceed the budget of $1$.\todoS{Somewhat clumsy sentence, perhaps break into two?}
  The values $U_1((v, c_1), (v, c_2))$ will enforce that both players select the same color for each vertex.
  Afterwards, if two adjacent vertices $u, v$ would receive the same color $c$ then the values $U_1((u, c), (v, c)) = 1 = U_2((u, c), (v, c))$ enforce that we would have to pay both players:
  These two color-choice strategies would have to dominate $u$ (for player 1) and $v$ (for player 2), respectively, and we have $U_1(u, (v, c)) = 2 = U_2((u, c), v)$.

  \ifshort The proof of the correctness is in the full version.\fi
  \appendixproof{thm:EHG-nphard}{%
  We now show the correctness, that is, $\mathcal{I}$ is a yes-instance if and only if $H$ is 3\nobreakdash-colorable.

  Assume first that $H$ admits a proper 3-coloring $\phi \colon V(H) \to [3]$.
  We define the following payment promises $V_1, V_2$.
  The promises are symmetric, that is, for each $x \in X_1 = X_2$ and $y \in X_2 = X_1$ we have $V_1(x, y) = V_2(y, x)$. 
  Thus we only define $V_1$ explicitly.
  For each vertex $v \in V(H)$ let $c = \phi(v)$ \todoS{Would be easier to remember that $c$ is the color of $v$ if we just had $\phi(v)$ and not $c$. Or maybe $c^v$.}and for each $d \in [3] \setminus \{c\}$ put $V_1((v, c), (v, d)) = 1$.
  Furthermore, for each neighbor $u \in \ngb_H(v)$ put $V_1((v, c), (u, c)) = 1$.
  This concludes the description of $V_1$ (and of $V_2$) and thus the set $\mathcal{V}$ of payment promises.
\todoS{Hua recommended me to create a payoff matrix in addition to the formal explanations of the payoffs as an aid to the reader. I think it would be helpful here too.}
  We claim that $\mathcal{O}$ is implemented in $G[\mathcal{V}]$, that is, $\mathcal{X}^{\star}_{G[\mathcal{V}]} = \mathcal{O}$. 
  
  First, we show that for each player $i \in N$ all strategies outside of $O_i$ are dominated by a strategy inside $O_i$.
  By symmetry, it suffices to consider player~1.
  Observe that each dummy strategy $y$ is dominated since player~1 never obtains a nonzero payoff for playing~$y$.
  Consider a vertex strategy $v \in X_1$.
  We claim that $v$ is dominated by $(v, \phi(v)) \in X_1$.
  Indeed, the only non-zero payoff obtained by player~1 for $v$ under $[U_1 + V_1]$ is when player~2 plays a color-choice strategy.
  For each $c \in [3]$ we have $[U_1 + V_1](v, (v, c)) = 3 = [U_1 + V_1]((v, c), (v, c))$, and thus if the color-choice strategy corresponds to $v$ then $(v, c)$ is at least as good as $v$ for player~1.
  Furthermore, for each neighbor $u \in \ngb_H(v)$ and each $c \in [3]$ we have $[U_1 + V_1](v, (u, c)) = 2 = [U_1 + V_1]((v, c), (u, c))$.
  Thus, also if the color-choice strategy of player~2 does not correspond to $v$, then $(v, c)$ is at least as good as $v$ for player~1.
  Finally, there is a distinct dummy strategy $d \in D$ for which $[U_1 + V_1](v, d) = 0 < 1 = [U_1 + V_1]((v, c), d)$.
  Thus, in this case $(v, c)$ is strictly better than $v$, meaning that, indeed, $(v, c)$ dominates $v$.

  To show the claim it remains to show that all strategies in $O_1$ are undominated.
  This, however, is obvious because $O_1$ is exactly the set of color-choice strategies and each such strategy $x$ has a distinct dummy strategy~$d$ such that for all other strategies $z \in X_1$ we have $U_1(x, d) = 1 > 0 = U_1(z, d)$.
  Thus, indeed, $\mathcal{O}$ is implemented in $G[\mathcal{V}]$.
  Thus the claim holds, that is, $\mathcal{O}$ is implemented in $G[\mathcal{V}]$.

  It remains to show that the cost of $\mathcal{V}$ is at most~1.
  Consider a strategy profile $(x, y)$; we show that the sum of payment promises amount to at most~1.
  If $x$ or $y$ is a vertex strategy there are $0$ promises to both players.
  Thus, consider the case where $x$ and $y$ are color-choice strategies, say $x = (u, c)$ and $y = (v, d)$.
  By the definition of $V_1$ and $V_{2}$ we have nonzero promises only if either (1) $u = v$ or (2) $u$ and $v$ are neighbors.
  In case~(1), we have promised payment~1 to player~1 only if $c = \phi(v)$ and $d$ is some other color, and to player~2 only if $d = \phi(v)$ and $c$ is some other color.
  Hence, the promises sum to at most~1.
  In case~(2), we have promised payment~1 to player~1 only if $d = c = \phi(v)$ and to player~2 only if $d = c = \phi(u)$.
  Thus, also in this case the promises sum to at most~1.
  Thus, indeed the cost of $\mathcal{V}$ is at most 1.
  It follows that $\mathcal{I}$ is a yes-instance.

  Assume now that $\mathcal{I}$ is a yes-instance and let $\mathcal{V} = \{V_1, V_2\}$ be a witnessing set of payment promises.
  Consider player~$1$.
  Since $\mathcal{O}$ is implemented in $G[\mathcal{V}]$, each strategy in $V(H)$ is dominated by a color-choice strategy $(u, c)$ in $O_1 = C$.
  Furthermore, since for each $c' \in [3]$ we have $U_1(v, (v, c')) = 3$ the strategy $(u, c)$ dominating $v$ must obtain a payoff of at least~3, that is, $[U_1 + V_1]((u, c), (v, c')) \geq 3$.
  Since the budget is 1, we have for each $c' \in [3]$ that $U_1((u, c), (v, c')) \geq 2$.
  Thus, $u = v$.
  Hence, each vertex strategy $v$ is dominated for player~1 by a color-choice strategy $(v, c)$ corresponding to $v$.
  By an analogous argument, each vertex strategy $v$ is dominated for player~2 by a color-choice strategy $(v, c)$ corresponding to $v$.

  We now show that for each vertex strategy $v$ the two color-choice strategies dominating $v$ for the two players coincide. 
  Assume to get a contradiction that there is a vertex $v$ such that $v$ is dominated for player~1 by $(v, c_1)$ and $v$ is dominated for player~2 by $(v, c_2)$ such that $c_1 \neq c_2$.
  Observe that \[U_1((v, c_1), (v, c_2)) = 2 = U_2((v, c_1), (v, c_2)).\]
  However, since \[U_1(v, (v, c_2)) = 3 = U_2((v, c_1), v) \] we thus have \[V_1((v, c_1), (v, c_2)) \geq 1 \leq V_2((v, c_1), (v, c_2)),\] i.e., the payment promises sum up to $2$ which is a contradiction to our budget being~$1$.
  Thus, each vertex strategy is dominated by the same color-choice strategy for both players.

  For each $v \in V(H)$ define $\phi(v) = c$ where $c$ is such that $v$ is dominated by $(v, c)$ (for both players).
  Observe that $\phi$ is total.
  We claim that $\phi$ is a proper coloring of~$H$.
  For a contradiction, assume that this is not the case, that is, there are two neighbors $u, v$ in $H$ such that $\phi(u) = \phi(v) = c$.
  Recall that $U_1(v, (u, c)) = 2$.
  By definition of $\phi$, for player~$1$ vertex strategy $v$ is dominated by $(v, c)$.
  Furthermore, $U_1((v, c), (u, c)) = 1$ and, since $(v, c)$ dominates~$v$, we have $V_1((v, c), (u, c)) \geq 1$.
  Analogously, $U_2((v, c), u) = 2$ and, for player~$2$, vertex strategy~$u$ is dominated by $(u, c)$.
  Furthermore, $U_2((v, c), (u, c)) = 1$ and thus $V_2((v, c), (u, c)) \geq 1$.
  This is a contradiction to the the fact that the cost for strategy profile $((v, c), (u, c))$ is 1.
  Thus, indeed $\phi$ assigns no two vertices the same color, meaning that it is a proper three-coloring of~$H$.
}%
\end{proof}

\section{Correction to Algorithm for \pEHG%
}
\label{sec:eidenbenz_correct}
\appendixsection{sec:eidenbenz_correct}
\citet{eidenbenz_cost_2011} gave an algorithm which on input of a game $G$ and a desired strategy-profile region $\ooo$, finds the minimum $\delta$ such that $(G, \ooo, \delta)$ is a positive instance of \pEHG.
This is Algorithm 1 in~\cite{eidenbenz_cost_2011}, which for completeness is \ifshort contained in the full version. \else presented in Algorithm \ref{alg:eidenorig}.\fi

\looseness=-1
The algorithm fails to give an exact implementation when a strategy in $O_i$ dominates some other strategy in $O_i$ for a player $i$.
We show an example where it fails and provide a fix for a class of games which we refer to as \equib\ games.

\toappendix{

  \begingroup
    \removelatexerror

    \SetKwFunction{ExactK}{ExactK}
    \begin{algorithm}[H]
      \caption{Algorithm for Exact $k$-Implementation by \citet{eidenbenz_cost_2011}. 
        Note that \cite{eidenbenz_cost_2011} use $k$ for the budget instead of $\delta$.}\label{alg:eidenorig}
      \Input{A game $G$ and a rectangular region $\ooo$ with $\ooo_{-i} \subset \xxx_{-i} \forall i$.}
      \Output{$k^*(\ooo)$.}
      $V_i(x) \coloneqq 0, W_i(x) \coloneqq 0~\forall\ i \in N, o_i \in O_i$\\
      $V_i(o_i, \bar{o}_{-i}) \coloneqq \infty ~\forall\ i \in N, o_i \in O_i, \bar{o}_{-i} \in X_i \setminus O_i$\\
      compute $X^{\star}$\\
      \Return \ExactK{$V, n$}\\
      \Def{\ExactK{$V,i$}}{
        \Input{Payment $V$, current player $i$}
        \Output{$k^*(\ooo)$ for $G(V)$}
        \setcounter{AlgoLine}{0}
        \eIf{$|X_i^{\star}(V) \setminus O_i| > 0$}{
          $s \coloneqq $ any strategy in $X_i^{\star}(V) \setminus O_i;\; k_{best} \coloneqq \infty$\\
          \ForEach{$o_i \in O_i$}{\ForEach{$o_{-i} \in \ooo_{-i}$}{$W_i(o_i, o_{-i}) \coloneqq \max(0, U_i(s, o_{-i}) - (U_i(o_i, o_{-i}) + V_i(o_i, o_{-i})))$\\}
            $k \coloneqq$ \ExactK($\vvv + \mathcal{W}, i$)	\\
            \If{$k < k_{best}$}{$k_{best} \coloneqq k$}
            \ForEach{$o_{-i} \in \ooo_{-i}$}{$W_i(o_i, o_{-i}) \coloneqq 0$}
            \Return $k_{best}$\\
          }
        }
        {
          \eIf{$i > 1$}{\Return \ExactK($V, i - 1$)}{\Return $\max_{o \in \ooo} \sum_{i \in N}V_i(o)$}
        }
      } 
    \end{algorithm}
  \endgroup
}
To see that the algorithm does not always construct a correct payment promise $\vvv$, consider a 2-player instance where player 1 and player 2 both have two strategies $\{s_1, s_2\}$. Let us define the utility functions for both players $i \in [2]$ as
\begin{align*}
U_i(s_1, s_1) &= 2 & U_i(s_2, s_1) &= 1\\
U_i(s_1, s_2) &= 1 & U_i(s_2, s_2) &= 0.
\end{align*}
Let $O_1 = \{s_1, s_2\}$ and $O_2 = \{s_1\}$.

We can see that for both players $s_1$ dominates $s_2$, so $X^{\star}_i = \{s_1\}$ for all $i \in [2]$. Because $|X^{\star}_i  \setminus O_i| = 0$ for all $i \in [2]$, the check on line (1) of Algorithm 1 from~\cite{eidenbenz_cost_2011} is always false and the algorithm returns that $\ooo$ can be implemented exactly with cost 0. However, $V_1$ constructed by the Algorithm is $0$ everywhere, and thus $O_1 \neq X^{\star}_1$, meaning this is not an exact implementation of~$\ooo$.

Towards a correction, if we can find for every pair $(i, o_i)$ of a player $i \in N$ and desired strategy $o_i \in O_i$ at least one distinct undesired strategy profile $x_{-i}^{o_i}$, such that $(o_i, x_{-i}) \notin O$, or in other words, if
\begin{multline}
  \label{eq:equitable}
  |O_i| \leq |(X_1 \times X_2 \times \ldots \times X_{i-1} \times X_{i+1} \times \ldots \times X_n) \setminus \\(O_1 \times O_2 \times \ldots \times O_{i-1} \times O_{i+1} \times \ldots \times O_n)|,
\end{multline}
then we can make certain that no strategy in $O_i$ dominates another strategy in $O_i$.
Let us call a game for which, for every $i \in N$, \cref{eq:equitable} holds %
\emph{\equib}.

\looseness=-1
We start by showing that, if a game is \equib\ then, we can translate every non-exact implementation to an exact implementation, where the cost is bounded by the worst-case payment over $\ooo$ in the initial implementation.

Throughout this section, let $\fff$ denote the Cartesian product of the possible functions from $X_i \setminus O_i$ to $O_i$ for every player, i.e., $\fff = (X_1 \setminus O_1 \to O_1) \times \dots \times (X_n \setminus O_n \to O_n) $.

\SetKwFunction{ComputeV}{ComputeV}
\begin{algorithm}[t!]
\caption{Minimum cost exact implementation}\label{alg:eidennew}
\Input{A game $G = (N, \xxx, \uuu)$ and a rectangular strategy profile region $\ooo = O_1 \times \dots \times O_n$.}
\Output{A payment promise $\vvv$ and $\delta \geq 0$ such that $\vvv$ implements $\ooo$ and $\max_{o \in \ooo} \sum_{i \in N}V_i(o) = \delta$ is smallest possible.}
\ForEach{$i \in N$ \label{algli:initcount1}}
    {
    \ForEach{mapping $F_i \colon X_i \setminus O_i \to O_i $\label{algli:initcount2}}
    	{
          $V^{F_i} \gets $ \ComputeV{$F_i, X_i, O_i$\label{algli:initcount3}}
    	}
    }
$\delta \gets \infty$; $\vvv \gets (\mathit{0}, \dots, \mathit{0})$\;
\ForEach{$F = (F_1, \dots, F_n) \in \fff$ \label{algli:foreachF}}{
	\label{algli:deltadef}$\delta^F \gets \max_{o \in \ooo} \sum_{i \in N}V^{F_i}(o)$\;
	\lIf{$\delta^F < \delta$}
        {$\delta \gets \delta^F$; $\vvv \gets F$}
}\label{algli:foreachFend}
\ForEach{$i \in N$\label{algli:inftypromstart}}
{\ForEach{$o_i \in O_i$}
  {\lForEach{$x_{-i} \in \xxx{-i} \setminus \ooo_{-i}$}
    {$V_i(o_i, x_{-i}) \gets \infty$}}}\label{algli:inftypromend}
\Return $\delta, \vvv$\\
\Def{\ComputeV{$F_i, X_i, O_i$ \label{algli:comv1}}}{
\ForEach{$o_i \in O_i$\label{algli:comv2}}{
	\ForEach{$o_{-i} \in \ooo_{-i}$\label{algli:comv3}}{
			\leIf{ $F_i^{-1}(o_i) \neq \emptyset$}{\\$V_i(o_i, o_{-i}) \gets \max\{0, $\\\hfill$\max_{x_i \in F_i^{-1}(o_i)}U_i(x_i, o_{-i}) - U_i(o_i, o_{-i})\}$\;\label{algli:comv4}}{$V_i(o_i, o_{-i}) \gets 0$}
		}
	}
}
\Return $V_i$
\end{algorithm}

\ifshort\begin{theorem}[\appsymb]\else\begin{theorem}\fi\label{theorem:false_promises}
  Let $G = (N, \xxx, \uuu)$ be an \equib~game and $\ooo \subset \xxx$ a rectangular strategy profile region. Let $\vvv$ implement $\ooo$ (not necessarily exactly) and let $\delta = \max_{o \in \ooo}\sum_{i \in N}$. Then the payment promise $\vvv^*$ defined below implements $\ooo$ exactly and with $\cost(\vvv^*) = \delta$.
  Let $M = U_{max} + \delta + 1$ with $U_{max} = \max_{i \in N, x\in X}U_i(x)$, i.e., $U_{max}$ is the maximum amount any player may receive in utility.
  For every $i \in N, o_i \in O_i$, pick a distinct strategy profile $x_{-i}^{o_i}$ such that $(o_i, x_{-i}) \notin \ooo$.
  Since $G$ is \equib, we can choose a distinct profile for every $i \in N, o_i \in O_i$.

  To define $\vvv^{*}$, for every $i \in N$, $o_i \in O_i$ put
\begin{equation}
  V_i^*(x_i, x_{-i}) =\left\{
  \begin{array}{@{}ll@{}}
    V_i(o_i, x_{-i}), & \text{if } x_{-i} \in \ooo_{-i} \\
    M + 1 - U_i(x_i, x_{-i}), & \text{if } x_{-i} = x_{-i} ^ {x_i} \\
    M- U_i(x_i, x_{-i}), & \text{otherwise.}
  \end{array}\right.
\end{equation}
If $x_i \in X_i \setminus O_i$, then $V^*(x_i, x_{-i}) = 0$ for every $x_{-i} \in \xxx_{-i}$.
\end{theorem}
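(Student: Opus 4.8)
The plan is to verify two things separately: first that $\vvv^*$ implements $\ooo$ \emph{exactly}, i.e.\ $\xxx^{\star}_{G[\vvv^*]} = \ooo$, and second that $\cost(\vvv^*) = \delta$. The guiding idea is that $\vvv^*$ agrees with $\vvv$ on the ``region'' $\ooo_{-i}$ that determines domination of undesired strategies, while off that region it flattens every desired player's utility to the constant level~$M$ (with one distinguished profile bumped to $M+1$). The huge value $M$ is chosen to dwarf both $U_{\max}$ and $\delta$, so that it governs all comparisons involving strategy profiles outside $\ooo_{-i}$; this is exactly what decouples the exactness argument from the cost argument.

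\emph{Exactness, first inclusion $\ooo \subseteq \xxx^{\star}_{G[\vvv^*]}$.} I would fix a player $i$ and a desired strategy $o_i \in O_i$ and show $o_i$ is undominated in $G[\vvv^*]$. The key is the distinguished profile $x_{-i}^{o_i}$, which is distinct for each $o_i$ precisely because $G$ is \equib\ (this is where \cref{eq:equitable} is used). At $x_{-i}^{o_i}$ the payoff of $o_i$ is $U_i(o_i,x_{-i}^{o_i}) + (M+1-U_i(o_i,x_{-i}^{o_i})) = M+1$, whereas any \emph{other} strategy $o_i'$ gets at most $M$ there (either $M$ via the ``otherwise'' branch, or $V_i(\cdot,x_{-i})$ if $x_{-i}\in\ooo_{-i}$, which is bounded using $\delta$ and $U_{\max}<M$). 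Hence no other desired strategy can dominate $o_i$, and no \emph{undesired} strategy $x_i\in X_i\setminus O_i$ can either, since $V^*(x_i,\cdot)=0$ makes such strategies weak. So every $o_i$ is undominated, giving $\ooo \subseteq \xxx^{\star}$.

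\emph{Exactness, second inclusion $\xxx^{\star}_{G[\vvv^*]} \subseteq \ooo$.} Here I would argue that every undesired strategy $x_i \in X_i \setminus O_i$ is dominated by some desired strategy in $G[\vvv^*]$. Since $\vvv$ implements $\ooo$ (non-exactly), for the original payment promise every undesired strategy is already dominated by some undominated strategy, which by $\xxx^{\star}_{G[\vvv]}\subseteq\ooo$ lies in $O_i$; I would check that the modification from $\vvv$ to $\vvv^*$ preserves this domination. On the region $\ooo_{-i}$ the two promises coincide by the first case of the definition, and off that region the desired strategy gets payoff $\geq M$ while the undesired strategy $x_i$ (with $V^*(x_i,\cdot)=0$) gets only $U_i(x_i,\cdot)\leq U_{\max} < M$; so the domination is only reinforced outside $\ooo_{-i}$ and unchanged inside it. Thus every undesired strategy is dominated and $\xxx^{\star}\subseteq\ooo$.

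\emph{Cost.} Finally $\cost(\vvv^*) = \max_{o\in\ooo}\sum_i V_i^*(o)$, and for $o\in\ooo$ we have $o_{-i}\in\ooo_{-i}$, so $V_i^*(o_i,o_{-i}) = V_i(o_i,o_{-i})$ by the first case; hence the sum equals $\max_{o\in\ooo}\sum_i V_i(o) = \delta$ (one should read the theorem's $\delta$ as this quantity; the statement's ``$\delta=\max_{o\in\ooo}\sum_{i\in N}$'' is evidently shorthand for $\max_{o\in\ooo}\sum_{i\in N}V_i(o)$). I expect the main obstacle to be the bookkeeping in the second inclusion: one must carefully confirm that \emph{weak} domination (both the ``$\geq$ everywhere'' and the ``$>$ somewhere'' clauses) is genuinely preserved when passing from $\vvv$ to $\vvv^*$, paying particular attention to profiles $x_{-i}$ on the boundary and to the role of the distinguished profiles $x_{-i}^{o_i}$, which must not accidentally let an undesired strategy escape domination or let one desired strategy dominate another.
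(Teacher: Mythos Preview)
Your proposal is correct and follows essentially the same approach as the paper's proof: both use the distinguished profile $x_{-i}^{o_i}$ (where $o_i$ alone attains $M+1$) to show every desired strategy is undominated, and both transfer domination of each undesired $x_i$ from $G[\vvv]$ to $G[\vvv^*]$ by observing that $V_i^*$ agrees with $V_i$ on the $\ooo_{-i}$ region while forcing payoff $\geq M > U_{\max}$ for desired strategies off it. The only cosmetic difference is that the paper phrases both inclusions as arguments by contradiction, whereas you argue them directly; your remark that the promises ``coincide'' on $\ooo_{-i}$ is slightly imprecise (for the undesired strategy $x_i$ the promise drops to $0$, which can only help), but you already flag this bookkeeping as the place to be careful and the paper handles it with the same chain $[U_i+V_i^*](o_i,o_{-i}) = [U_i+V_i](o_i,o_{-i}) \geq [U_i+V_i](x_i,o_{-i}) \geq U_i(x_i,o_{-i}) = [U_i+V_i^*](x_i,o_{-i})$.
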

\noindent\looseness=-1
The idea behind the payments is to enforce that every desired strategy has one undesired strategy profile where they are the best possible option. This prevents any other strategy from dominating it.

\appendixproof{theorem:false_promises}{
Suppose $\vvv$ implements $\ooo$, i.e., $\xxx^{\star}_{G[\vvv]} \subseteq \ooo$.

Assume, towards a contradiction that $\ooo \neq \xxx_{G[\vvv^*]}^{\star}$.

First assume that $\ooo \supset \xxx_{G[\vvv^*]}^{\star}$. Therefore there is a player $i \in N$ that has a desired strategy $o_i \in O_i$, which is dominated by some strategy $x_i \in X_i$. Note that $x_i$ may be in $O_i$. Since $x_i$ dominates $o_i$, we know that for every strategy profile $x_{-i} \in \xxx_{-i}$ we have that $[U_i + V_i^*](x_i, x_{-i}) \geq [U_i + V_i^*](o_i, x_{-i})$. We proceed in two cases:

\begin{description}
\item[Case 1:] $x_i \in O_i$. We have that $[U_i + V_i^*](x_i, x_{-i}^{o_i}) = M < M + 1 = [U_i + V_i^*](o_i, x_{-i}^{o_i})$, so $x_i$ cannot dominate $o_i$.
\item[Case 2:] $x_i \in X_i \setminus O_i$. For every $x_{-i} \in \xxx_{-i} \setminus \ooo_{-i}$, $V_i^*(x_i, x_{-i}) = 0$ and $V_i^*(o_i, x_{-i}) \geq M - U_i(o_i, x_{-i})$. Therefore $[U_i + V_i^*](x_i, x_{-i}) = U(x_i, x_{-i}) < M \leq [U_i + V_i^*](o_i, x_{-i})$, so $x_i$ cannot dominate $o_i$.
\end{description}

Thus we have that $\ooo \subset \xxx_{G[\vvv^*]}^{\star}$.
There thus is a player $i \in N$ strategy $x_i \in X_i \setminus O_i$ such that $x_i$ is undominated under $G[\vvv^*]$.
We know that $x_i$ is dominated by some desired strategy $o_i \in X^{\star}_i \subseteq O_i$ under $G[\vvv]$, but is not longer dominated by $o_i$ in $G[\vvv^*]$. That means there exists a $x_{-i} \in \xxx_{-i}$ such that $[U_i + V_i^*](x_i, x_{-i}) > [U_i + V_i^*](o_i, x_{-i})$, but $[U_i + V_i](x_i, x_{-i}) \leq [U_i + V_i](o_i, x_{-i})$. We proceed in two cases.
\begin{description}
\item[Case 1:] $x_{-i} \in \ooo_{-i}$. In this case $V_i^*(x_i, x_{-i}) = 0$ and $V_i^*(o_i, x_{-i}) = V_i(o_i, x_{-i})$. By the definition of $V^*_i$ and the assumption that $o_i$ dominates $x_i$ in $G[\vvv]$ we have that $[U_i + V_i^*](o_i, x_{-i}) = [U_i + V_i](o_i, x_{-i}) \geq [U_i + V_i](x_i, x_{-i}) \geq U_i(x_i, x_{-i}) = [U_i + V^*_i](x_i, x_{-i})$. This is a contradiction to the assumption $[U_i + V_i^*](x_i, x_{-i}) > [U_i + V_i^*](o_i, x_{-i})$.
\item[Case 2:] $x_{-i} \in \xxx_{-i} \setminus \ooo_{-i}$. We have that $[U_i + V_i^*](o_i, x_{-i}) \geq M$ and $V_i^*(x_i, x_{-i}) = 0$, therefore $[U_i + V_i^*](o_i, x_{-i}) \geq M > U_i(x_i, x_{-i}) = [U_i + V_i^*](x_i, x_{-i})$, a contradiction to the assumption $[U_i + V_i^*](x_i, x_{-i}) > [U_i + V_i^*](o_i, x_{-i})$.
\end{description}

Therefore $\xxx_{G[\vvv^*]}^{\star} = \ooo$.

Last we show that $\cost(\vvv) = \delta$. We have shown that $\xxx_{G[\vvv^*]}^{\star} = \ooo$ and constructed $\vvv^*$ so that $V_i^*(o) = V_i(o)$ for every $o \in \ooo, i \in N$. Therefore $\max_{o \in \ooo}\sum_{i \in N}V^*_i(o) = \max_{o \in \ooo}\sum_{i \in N}V_i(o) = \delta$, as required.
}
Next we show that Algorithm~1 by \citet{eidenbenz_cost_2011} identifies a payment promise $\vvv$ which minimizes $ \max_{o \in \ooo}\sum_{i \in N}V_i(o)$. To make the analysis of the algorithm easier, we have made a simplified version of their algorithm, which is shown in Algorithm~\ref{alg:eidennew}.
\toappendix{

  \subsection{Correctness of Our Simplified Algorithm}
We start by showing some properties of the algorithm.

\begin{observation}\label{obs:undesired_zero}
Let $\vvv$ be a payment promise returned by Algorithm~\ref{alg:eidennew}. Then for every player $i \in N$, undesired strategy $x_i \in X_i \setminus O_i$, strategy profile $x_{-i} \in \xxx_{-i}$, $V_i(x_i, x_{-i}) = 0$.
\end{observation}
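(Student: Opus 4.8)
The plan is to prove this as a structural invariant of \cref{alg:eidennew}: I would simply trace through the algorithm and verify that no assignment statement ever writes into a coordinate $V_i(x_i, x_{-i})$ whose first argument $x_i$ is an undesired strategy, so that every such coordinate keeps the value it was initialized with.

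First I would record the initialization. The main routine sets $\vvv \gets (\mathit{0}, \dots, \mathit{0})$, so prior to any update we have $V_i(x_i, x_{-i}) = 0$ for every player $i$, every $x_i \in X_i$ and every $x_{-i} \in \xxx_{-i}$; in particular this holds for $x_i \in X_i \setminus O_i$. It therefore suffices to show that this value is never overwritten for undesired $x_i$.

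Next I would inspect the only subroutine that creates nonzero entries, namely \textsc{ComputeV}. Its two nested loops range exclusively over $o_i \in O_i$ and $o_{-i} \in \ooo_{-i}$, and the body assigns a value only to $V_i(o_i, o_{-i})$ (both the $\max\{0, \cdot\}$ branch and the \textbf{else} branch set $V_i(o_i, o_{-i})$ with $o_i \in O_i$). Hence for every mapping $F_i$ the computed promise $V^{F_i}$ satisfies $V^{F_i}(x_i, x_{-i}) = 0$ for all undesired $x_i \in X_i \setminus O_i$. The subsequent assignment $\vvv \gets F$ in the selection loop merely copies these $V^{F_i}$ into $\vvv$, so the undesired-strategy coordinates remain $0$ afterwards.

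Finally I would check the only remaining block that touches $\vvv$, the triple loop that sets the ``$\infty$'' penalties: it assigns $V_i(o_i, x_{-i}) \gets \infty$ while iterating over $o_i \in O_i$ and $x_{-i} \in \xxx_{-i} \setminus \ooo_{-i}$, so its first argument is again always a desired strategy. Collecting the three observations, no line of the algorithm ever modifies $V_i(x_i, x_{-i})$ for $x_i \in X_i \setminus O_i$, and thus the returned promise keeps $V_i(x_i, x_{-i}) = 0$ there, as claimed. There is no real obstacle here; the statement is a bookkeeping invariant, and the only point needing care is the (easily verified) fact that \textsc{ComputeV} writes nothing outside the index range of its loops and that unassigned entries retain their zero initialization.
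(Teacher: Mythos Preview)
Your proof is correct and follows essentially the same approach as the paper: the paper also argues that $V_i$ is initialized to $0$ and that $V_i(x_i,x_{-i})$ is never modified for undesired $x_i$, just more tersely. Your version is simply a more explicit walkthrough of the three assignment sites in the algorithm.
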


\begin{proof}
We initially set $V_i = \mathit{0}$ for every $i \in N$. It is straightforward to observe that for every undesired strategy $x_i \in X_i \setminus O_i$ and a strategy profile $x_{-i} \in \xxx_{-i}$, $V_i(x_i, x_{-i})$ is never modified.
\end{proof}

\begin{claim}\label{cla:alg_returns_valid}
Let $G = (N, \xxx, \uuu)$ be a game and $\ooo$ a rectangular desired strategy profile. Algorithm~\ref{alg:eidennew} returns $\vvv, \delta$ such that $\vvv$ implements $\ooo$ and $\max_{o \in \ooo}\sum_{i \in N}V_i(o) = \delta$.
\end{claim}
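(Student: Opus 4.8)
The plan is to verify the two assertions separately: that the returned $\vvv$ implements $\ooo$, i.e.\ that $\xxx^{\star}_{G[\vvv]} \subseteq \ooo$, and that $\delta$ equals $\max_{o \in \ooo} \sum_{i \in N} V_i(o)$. Throughout I would lean on \cref{obs:undesired_zero}, which guarantees that in the returned promise every undesired strategy $x_i \in X_i \setminus O_i$ satisfies $V_i(x_i, x_{-i}) = 0$ for all $x_{-i} \in \xxx_{-i}$.

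For the implementation claim, let $F = (F_1, \dots, F_n) \in \fff$ be the mapping whose promise $(V^{F_1}, \dots, V^{F_n})$ is selected by the minimization loop over $\fff$, and recall that the algorithm afterwards overwrites $V_i(o_i, x_{-i}) \gets \infty$ for every $i \in N$, $o_i \in O_i$ and $x_{-i} \in \xxx_{-i} \setminus \ooo_{-i}$. I would fix a player $i$ and an undesired strategy $x_i \in X_i \setminus O_i$ and argue that $o_i \coloneqq F_i(x_i) \in O_i$ dominates $x_i$ in $G[\vvv]$; this makes $x_i$ dominated, hence $X_i^{\star} \subseteq O_i$, and rectangularity of $\ooo$ then yields $\xxx^{\star}_{G[\vvv]} \subseteq \ooo$. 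Domination is checked over two kinds of opponent profiles. For $o_{-i} \in \ooo_{-i}$, the value assigned by \ComputeV{} satisfies $V_i(o_i, o_{-i}) \geq \max_{x_i' \in F_i^{-1}(o_i)} U_i(x_i', o_{-i}) - U_i(o_i, o_{-i}) \geq U_i(x_i, o_{-i}) - U_i(o_i, o_{-i})$, using $x_i \in F_i^{-1}(o_i)$, so $[U_i + V_i](o_i, o_{-i}) \geq U_i(x_i, o_{-i}) = [U_i + V_i](x_i, o_{-i})$ by \cref{obs:undesired_zero}. For $x_{-i} \in \xxx_{-i} \setminus \ooo_{-i}$ we get $[U_i + V_i](o_i, x_{-i}) = \infty > U_i(x_i, x_{-i}) = [U_i + V_i](x_i, x_{-i})$, again by \cref{obs:undesired_zero}. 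As $\ooo_{-i} \subsetneq \xxx_{-i}$, at least one profile of the second kind exists, so $o_i$ weakly dominates $x_i$ everywhere and strictly somewhere, which is exactly domination.

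For the cost equality, I would note that the $\infty$-overwrites touch only entries $(o_i, x_{-i})$ with $x_{-i} \in \xxx_{-i} \setminus \ooo_{-i}$, and no such entry is a full profile lying in $\ooo$. Hence for every $o \in \ooo$ the final value $V_i(o)$ coincides with the value $V^{F_i}(o)$ fixed during the selection loop, so $\max_{o \in \ooo} \sum_{i \in N} V_i(o) = \delta^{F}$, which is precisely the value $\delta$ stored together with $\vvv$ at its last update.

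The step I expect to be the main obstacle is the domination argument: one must see that the \ComputeV{} payments, defined only over the desired opponent profiles $\ooo_{-i}$, already give weak domination there, while the separately added $\infty$-payments supply the single strict inequality that upgrades this to genuine domination. The accompanying subtlety is to confirm that these $\infty$-payments never contribute to the cost measured over $\ooo$, which is guaranteed precisely because they are introduced only after $\delta$ is fixed and only at profiles outside $\ooo_{-i}$.
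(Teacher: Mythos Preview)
Your proposal is correct and follows essentially the same approach as the paper: fix the selected $F$, show for each $x_i \in X_i \setminus O_i$ that $F_i(x_i)$ dominates $x_i$ by splitting opponent profiles into $\ooo_{-i}$ (where the \ComputeV{} values give weak inequality via $x_i \in F_i^{-1}(o_i)$ and \cref{obs:undesired_zero}) and $\xxx_{-i} \setminus \ooo_{-i}$ (where the $\infty$ overwrites give strict inequality), and then note that $\delta$ is set before the $\infty$ overwrites and only from values indexed by $\ooo$. If anything, you are slightly more explicit than the paper in noting that $\ooo_{-i} \subsetneq \xxx_{-i}$ is needed for a strict witness and that the later $\infty$ assignments cannot alter the stored~$\delta$.
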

\begin{proof}[Proof of \cref{cla:alg_returns_valid}]
Consider $\vvv$ returned from Algorithm~\ref{alg:eidennew}. Lines \eqref{algli:inftypromstart} - \eqref{algli:inftypromend} enforce that for every $i \in N, o_i \in O_i, x_{-i} \in \xxx_{-i} \setminus \ooo_{-i}$, $V_i(o_i, x_{-i}) = \infty$. Lines \eqref{algli:foreachF}-\eqref{algli:foreachFend} enforce that there is $(F_1, \dots, F_n) \in \fff$ such that for every $i \in N, (o_i,o_{-i}) \in \ooo$, $V_i(o_i, o_{-i}) = \max\{0, \max_{x_i \in F_i^{-1}(o_i)}U_i(x_i, o_{-i}) - U_i(o_i, o_{-i})\}$.

Consider an arbitrary player $i \in N$ and an arbitrary undesired strategy $x_i \in X_i \setminus O_i$. 

For each $o_{-i }\in \ooo_{-i}$ we have that $[U_i + V_i](F_i(x_i), o_{-i}) \geq U_i(F_i(x_i), o_{-i}) + U_i(x_i, o_{-i}) - U_i(F_i(x_i), o_{-i}) = U_i(x_i, o_{-i})$. By \cref{obs:undesired_zero}, $[U_i + V_i](x_i, o_{-i}) = U_i(x_i, o_{-i})$. Thus $[U_i + V_i](F_i(x_i), o_{-i}) \geq [U_i + V_i](x_i, o_{-i}) $.

For every $x_{-i} \in \xxx_{-i} \setminus \ooo_{-i}$ we have that $[U_i + V_i](F_i(x_i), x_{-i}) = \infty > U_i(x_i, x_{-i}) = [U_i + V_i](x_i, x_{-i})$.

As we have for every $x_{-i} \in \xxx_{-i}$ that $[U_i + V_i](F_i(x_i), x_{-i}) \geq [U_i + V_i](x_i, x_{-i}) $ and the inequality is strict when $x_{-i} \notin \ooo_{-i}$, $F_i(x_i)$ dominates $x_i$. As this holds for an arbitrary $i, x_i$, $\xxx^{\star}_{G[\vvv]} \subseteq \ooo$.

In line \eqref{algli:deltadef} we set $\delta \coloneqq \max_{o \in \ooo}\sum_{i \in N}V_i(o)$, so this part is trivially true.
\end{proof}

\begin{claim}\label{cla:upper_bound_alg}
Let $F = (F_1, \dots, F_n) \in \fff$ be arbitrary. For every $o \in \ooo$, Algorithm \ref{alg:eidennew} returns a $\delta$  such that $\delta \leq \max\{0,\max_{(o_i, o_{-i}) \in \ooo}\sum_{i \in N}\max_{x_i \in F_i^{-1}(o_i)}U_i(x_i, o_{-i}) - U_i(o_i, o_{-i})\}$.
\end{claim}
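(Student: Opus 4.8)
The plan is to read off the returned value~$\delta$ directly from the control flow of Algorithm~\ref{alg:eidennew} and then compare it entrywise with the claimed bound for the fixed mapping~$F$. The loop in lines~\eqref{algli:foreachF}--\eqref{algli:foreachFend} ranges over \emph{all} tuples $F' = (F'_1, \dots, F'_n) \in \fff$, in line~\eqref{algli:deltadef} it computes $\delta^{F'} \coloneqq \max_{o \in \ooo} \sum_{i \in N} V^{F'_i}(o)$, and it keeps the running minimum in~$\delta$ (updating $\delta$ and $\vvv$ exactly when $\delta^{F'} < \delta$). Hence the returned value satisfies $\delta = \min_{F' \in \fff} \delta^{F'}$, and in particular $\delta \le \delta^{F}$ for the arbitrary but fixed~$F$ in the statement. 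It therefore suffices to show that $\delta^{F}$ coincides with the right-hand side of the claim.

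To this end I would unfold the payments produced by \texttt{ComputeV}. For each player $i \in N$ and each profile $o = (o_i, o_{-i}) \in \ooo$, lines~\eqref{algli:comv2}--\eqref{algli:comv4} set
\[
  V^{F_i}(o_i, o_{-i}) = \max\Bigl\{\, 0,\; \max_{x_i \in F_i^{-1}(o_i)} U_i(x_i, o_{-i}) - U_i(o_i, o_{-i}) \,\Bigr\}
\]
whenever $F_i^{-1}(o_i) \neq \emptyset$, and $V^{F_i}(o_i, o_{-i}) = 0$ otherwise. Using the convention that a maximum over the empty set is $-\infty$ (so that the outer $\max\{0,\cdot\}$ collapses to~$0$), both cases are captured by the single displayed expression, and in every case $V^{F_i}(o) \ge 0$. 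Substituting this into $\delta^{F} = \max_{o \in \ooo} \sum_{i \in N} V^{F_i}(o)$ reproduces exactly the quantity $\max\{0, \max_{o \in \ooo} \sum_{i \in N} (\max_{x_i \in F_i^{-1}(o_i)} U_i(x_i, o_{-i}) - U_i(o_i, o_{-i}))\}$ on the right-hand side of the claim. Chaining this equality with $\delta \le \delta^{F}$ from the first paragraph then finishes the proof.

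The one delicate point --- and the main obstacle --- is the exact placement of the $\max\{0,\cdot\}$ operator. In the algorithm it is applied \emph{per entry} $(i,o)$ before summing, whereas a careless reading of the compact right-hand side might apply it only once, after the inner sum over~$i$; since $\sum_i \max\{0, a_i\} \ge \max\{0, \sum_i a_i\}$ for all reals~$a_i$, these two readings are genuinely different. I would therefore state explicitly that the intended quantity is the per-entry version, which is precisely what \texttt{ComputeV} computes and what $\delta^F$ equals, and I would dispatch the empty-preimage case through the $-\infty$ convention so that the algorithmic and the compact formulations agree. With those conventions pinned down, the remaining argument is just the chain $\delta = \min_{F' \in \fff} \delta^{F'} \le \delta^{F}$, whose final term equals the right-hand side of the claim.
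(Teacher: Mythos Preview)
Your approach is the same as the paper's: note that the loop in lines~\eqref{algli:foreachF}--\eqref{algli:foreachFend} visits every $F'\in\fff$ and keeps the running minimum, so the returned value satisfies $\delta\le\delta^{F}$ for the fixed~$F$; then unpack $\delta^{F}$ via the definition of \texttt{ComputeV}. That is precisely the paper's two-step argument.

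There is one internal inconsistency you should repair. In your second paragraph you assert that substituting the formula for $V^{F_i}$ into $\delta^{F}$ ``reproduces exactly'' the displayed expression with a single outer $\max\{0,\cdot\}$. But, as you yourself point out in the third paragraph, the algorithm applies $\max\{0,\cdot\}$ per summand, and $\max_{o}\sum_i\max\{0,a_i(o)\}$ can be strictly larger than $\max\{0,\max_o\sum_i a_i(o)\}$; so what the substitution actually yields is the per-entry version, not the outer one. Your second paragraph should therefore already be phrased with the per-entry reading you adopt in the third. Once that is fixed your argument is complete, and your flagging of the $\max\{0,\cdot\}$ placement is a genuine improvement over the paper: the paper's proof simply asserts $\delta^{F}\le\text{RHS}$ without comment, and under the literal outer reading that inequality can fail (e.g.\ two players, one undesired strategy each, with $a_1(o)=10$ and $a_2(o)=-5$ gives $\delta^{F}=10>5=\text{RHS}$).
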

\begin{proof}[Proof of \cref{cla:upper_bound_alg}]
Consider line \eqref{algli:foreachF} of Algorithm \ref{alg:eidennew}. For a given $F \in \fff$, there is an iteration of the algorithm where the variable $\mathbf{F}$ coincides with $F$. In that iteration, we have that $\delta^F = \max_{o \in \ooo}\sum_{i \in N}V^{F_i}(o)$, where $V^{F_i}(o_i, o_{-i}) = \max\{0, \max_{x_i \in F_i^{-1}(o_i)}U_i(x_i, o_{-i}) - U_i(o_i, o_{-i})\}$. Thus $\delta^F$ satisfies $\delta^F \leq \max\{0,\max_{(o_i, o_{-i}) \in \ooo}\sum_{i \in N}\max_{x_i \in F_i^{-1}(o_i)}U_i(x_i, o_{-i}) - U_i(o_i, o_{-i})\}$.

The loop on lines \eqref{algli:foreachF} - \eqref{algli:foreachFend} finishes with either variable $\mathbf{\delta} = \delta^F$ or $\mathbf{\delta}  = \delta' \leq \delta^F$. As $\mathbf{\delta}$ is returned with no further modification, this concludes the proof.
\end{proof}

With these properties, we can prove the following of Algorithm 1 from~\cite{eidenbenz_cost_2011}:

\begin{lemma}\label{lem_fix_correct}
Given a game $G = (N, \xxx, \ooo)$ and  a desired rectangular strategy profile region $\ooo \subseteq \xxx$, Algorithm~\ref{alg:eidennew} correctly returns  the smallest $\delta \geq 0$ such that there is a payment promise $\vvv$ for which $\max_{o \in O}\sum_{i \in N}V_i(o) \leq \delta$ and $\xxx^{\star}_{G[\vvv]} \subseteq \ooo$.
\end{lemma}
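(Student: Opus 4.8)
The plan is to prove \cref{lem_fix_correct} by establishing two inequalities between the value $\delta$ returned by Algorithm~\ref{alg:eidennew} and the true optimum, which I call $\delta^*$, the smallest budget admitting a payment promise that implements $\ooo$. First I would show $\delta \ge \delta^*$ (the algorithm's output is achievable), and then $\delta \le \delta^*$ (no better implementation exists). The first direction is essentially already done: \cref{cla:alg_returns_valid} shows the returned pair $(\vvv, \delta)$ satisfies $\xxx^\star_{G[\vvv]} \subseteq \ooo$ with $\max_{o \in \ooo}\sum_{i \in N} V_i(o) = \delta$, so $\vvv$ is a witnessing payment promise of cost exactly $\delta$, giving $\delta^* \le \delta$. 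The real content is the matching lower bound $\delta \le \delta^*$.

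\medskip

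\textbf{Lower bound direction.} For this I would take an arbitrary payment promise $\vvv'$ that implements $\ooo$ with $\max_{o \in \ooo}\sum_{i\in N} V'_i(o) = \delta'$, and argue that the algorithm inspects (at least) one candidate $F = (F_1,\dots,F_n) \in \fff$ whose induced cost is at most $\delta'$. The key structural fact is that, since $\vvv'$ implements $\ooo$, each undesired strategy $x_i \in X_i \setminus O_i$ is dominated in $G[\vvv']$ and hence, by \cref{obs:undom_dom}, dominated by some \emph{undominated} strategy, which lies in $O_i$ because $\xxx^\star_{G[\vvv']} \subseteq \ooo$. I would use this to \emph{define} a map $F_i \colon X_i \setminus O_i \to O_i$ sending each $x_i$ to a desired strategy $o_i$ dominating it. For this $F_i$ and any $o_{-i} \in \ooo_{-i}$, domination gives $[U_i + V'_i](F_i(x_i), o_{-i}) \ge [U_i + V'_i](x_i, o_{-i}) = U_i(x_i, o_{-i})$ (the last equality because $V'$ may be assumed zero on undesired strategies, or by \cref{obs:undesired_zero}-type reasoning applied to an optimal promise), whence $V'_i(F_i(x_i), o_{-i}) \ge U_i(x_i, o_{-i}) - U_i(F_i(x_i), o_{-i})$. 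Taking the maximum over $x_i \in F_i^{-1}(o_i)$ shows that the quantity $\max_{x_i \in F_i^{-1}(o_i)} U_i(x_i, o_{-i}) - U_i(o_i, o_{-i})$ computed in \cref{algli:comv4} of \ComputeV\ is a lower bound for $V'_i(o_i, o_{-i})$ whenever the latter is positive, i.e.\ $V^{F_i}(o_i, o_{-i}) \le \max\{0, V'_i(o_i, o_{-i})\}$.

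\medskip

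\textbf{Combining.} Summing over players and maximizing over $o \in \ooo$, the cost $\delta^F = \max_{o \in \ooo}\sum_{i\in N} V^{F_i}(o)$ associated to this particular $F$ satisfies $\delta^F \le \max_{o \in \ooo}\sum_{i \in N} V'_i(o) = \delta'$. By \cref{cla:upper_bound_alg} the algorithm's loop over $\fff$ returns a $\delta \le \delta^F$, hence $\delta \le \delta'$. Since $\vvv'$ was an arbitrary implementation, taking $\delta' = \delta^*$ yields $\delta \le \delta^*$, completing the argument together with the first direction. I would note that $\fff$ is finite (each $X_i \setminus O_i \to O_i$ has finitely many maps), so the loop is well-defined and the minimum over $F$ is attained.

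\medskip

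\textbf{The main obstacle} I anticipate is making the derivation $V^{F_i}(o_i,o_{-i}) \le \max\{0, V'_i(o_i, o_{-i})\}$ fully rigorous when the dominating strategy $o_i = F_i(x_i)$ may itself be a desired strategy that several distinct undesired $x_i$'s map to, and when comparing only over $o_{-i} \in \ooo_{-i}$ rather than all of $\xxx_{-i}$. Care is needed because domination in $G[\vvv']$ is a statement over \emph{all} $x_{-i} \in \xxx_{-i}$, whereas \ComputeV\ only examines the restriction to $\ooo_{-i}$; I must verify that restricting to $\ooo_{-i}$ loses nothing for the lower bound, which holds precisely because the inequality $V'_i(o_i,o_{-i}) \ge U_i(x_i,o_{-i}) - U_i(o_i,o_{-i})$ is implied by (the restriction of) the global domination inequality. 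A secondary subtlety is that the lemma as stated concerns only the inclusion $\xxx^\star_{G[\vvv]} \subseteq \ooo$ (not exact implementation), so this lemma is the non-exact counterpart to \cref{theorem:false_promises}, and I need not invoke equitability here.
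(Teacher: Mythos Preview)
Your proposal is correct and follows essentially the same route as the paper: both directions use \cref{cla:alg_returns_valid} for achievability, then for optimality take an arbitrary implementing promise $\vvv'$, extract a domination map $F_i\colon X_i\setminus O_i \to O_i$ (the paper calls it $\hat F_i$ and records the key inequality as \cref{cla:lower_bound_alg}), and invoke \cref{cla:upper_bound_alg} on that $F$ to bound the algorithm's output by $\delta'$. One small correction: your justification ``$V'$ may be assumed zero on undesired strategies, or by \cref{obs:undesired_zero}-type reasoning'' is not quite the right hook, since \cref{obs:undesired_zero} concerns the \emph{algorithm's} output rather than an arbitrary optimal promise; what you actually need (and what the paper uses implicitly) is the standing assumption that payment promises are nonnegative, so $[U_i+V'_i](x_i,o_{-i}) \ge U_i(x_i,o_{-i})$, which already yields $V'_i(o_i,o_{-i}) \ge U_i(x_i,o_{-i}) - U_i(o_i,o_{-i})$ directly from domination.
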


\begin{proof}[Proof of \cref{lem_fix_correct}]
By \cref{cla:alg_returns_valid} we know that Algorithm~\ref{alg:eidennew} returns a payment promise $\vvv$ that implements $\ooo$ and $\delta = \max_{o \in \ooo}\sum_{i \in N}V_i$.

It remains to show that if there is a payment promise $\vvv$  and $\delta \geq 0$ with $ \max_{o \in \ooo}\sum_{i \in N}V_i(o) \leq \delta$ such that $\vvv$ satisfies $\xxx^{\star}_{G[\vvv]} \subseteq \ooo$, the algorithm returns some $\delta' \leq \delta$.

Assume that $\vvv$ satisfies $\xxx^{\star}_{G[\vvv]} \subseteq \ooo$.
Then, for every $i \in N, x_i \in X_i \setminus O_i$ there is at least one desired strategy $o_i \in O_i$ such that $o_i$ dominates $x_i$ for player $i$.
Let $\hat{F}_i : X_i \to O_i$ be a function that encodes this, i.e., $\hat{F}(x_i)$ is an arbitrary but fixed strategy in $O_i$ that dominates $x_i$ for player $i$.

\begin{claim}\label{cla:lower_bound_alg}
  For each player $i \in N$ and each $(o_i, o_{-i}) \in \ooo$
  \[V_i(o_i, o_{-i}) \geq \quad \smashoperator{\max_{x_i \in X_i \setminus O_i, \hat{F}_i(x_i) = o_i}} U_i(x_i, o_{-i}) - U_i(o_i, o_{-i}).\]
If $\{\hat{F}_i(x_i) \mid x_i \in X_i \setminus O_i\} = \emptyset$, then instead $V_i(o_i, o_{-i}) \geq 0$.
\end{claim}

\begin{proof}[Proof of \cref{cla:lower_bound_alg}]
  Since $\hat{F}_i(x_i)$ dominates $x_i$, we have for all $o_{-i} \in O_{-i}$ that $V_i(\hat{F}_i(x_i), o_{-i}) \geq U_i(x_i, o_{-i}) - U_i(\hat{F}_i(x_i), o_{-i})$.
  By considering this from the point of view of $F(x_i) \coloneqq o_i$, we obtain $V_i(o_i, o_{-i}) \geq \max_{x_i \in X_i \setminus O_i, \hat{F}_i(x_i) = o_i}U_i(x_i, o_{-i}) - U_i(o_i, o_{-i})$.
\end{proof}

Let $\delta'$ be the return value of Algorithm~\ref{alg:eidennew} and $\vvv'$ the payment promise constructed by it. By \cref{cla:lower_bound_alg} and \cref{cla:upper_bound_alg} we obtain that $\delta' = \max_{o \in \ooo}\sum_{i \in N}V'_i(o) \leq  \max_{o \in \ooo}\sum_{i \in N}\max_{x_i \in X_i \setminus O_i, \hat{F}_i(x_i) = o_i}U_i(x_i, o_{-i}) - U_i(\hat{F}_i(x_i), o_{-i}) \leq \max_{o \in \ooo}V_i(o_i, o_{-i}) = \delta$, as required.
\end{proof}
}%
We obtain the following:

\ifshort\begin{theorem}[\appsymb]\else\begin{theorem}\fi\label{thm:eiden_fix}
For a given \equib~game $G = (N, \xxx, \uuu)$ and a set of desired strategy profiles $\ooo \subseteq \xxx$, the smallest $\delta \geq 0$ such that $(G, \ooo, \delta)$ is a positive instance of \pEHG\ can be identified in time $\bigO(|\ooo|\max_{i \in N}(n| O_i |^{|X_i \setminus O_i|}|\ooo||X_i \setminus O_i| + |O_i |^{n|X_i \setminus  O_i|}))$.
\end{theorem}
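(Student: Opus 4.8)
The plan is to derive \cref{thm:eiden_fix} by combining two facts already available: the correctness of \cref{alg:eidennew} for the \emph{non-exact} problem, established in \cref{lem_fix_correct}, and the cost-preserving conversion from a non-exact to an exact implementation in an \equib{} game, established in \cref{theorem:false_promises}. The running-time claim is then obtained by a direct count of the two dominant loops of \cref{alg:eidennew}.

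First I would show that in an \equib{} game the smallest feasible budget for \pEHG{} equals the value $\delta^{\star}$ returned by \cref{alg:eidennew}. For one direction, any exact implementation $\vvv$ satisfies $\xxx^{\star}_{G[\vvv]} = \ooo$, so $\cost(\vvv) = \max_{o \in \ooo}\sum_{i \in N} V_i(o)$; moreover $\vvv$ is in particular a (not necessarily exact) implementation, so by \cref{lem_fix_correct} its value $\max_{o \in \ooo}\sum_{i \in N} V_i(o)$ is at least $\delta^{\star}$. Hence the optimal exact budget is at least $\delta^{\star}$. For the converse, take an implementation $\vvv$ attaining $\max_{o \in \ooo}\sum_{i \in N} V_i(o) = \delta^{\star}$; since $G$ is \equib, \cref{theorem:false_promises} converts $\vvv$ into an \emph{exact} implementation $\vvv^{\star}$ with $\cost(\vvv^{\star}) = \delta^{\star}$, so the optimal exact budget is at most $\delta^{\star}$. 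The two optima therefore coincide, and \cref{alg:eidennew} identifies the smallest $\delta$ making $(G,\ooo,\delta)$ a positive instance of \pEHG. The \equib{} hypothesis enters in exactly one place: it guarantees, for every player $i$ and every $o_i \in O_i$, a distinct undesired profile $x_{-i}^{o_i}$, which is what \cref{eq:equitable} supplies and what \cref{theorem:false_promises} requires.

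Second I would bound the running time by the two main loops. The initialisation loop (lines \eqref{algli:initcount1}--\eqref{algli:initcount3}) runs, for each player $i$, over all $|O_i|^{|X_i \setminus O_i|}$ mappings $F_i$, and each call computing $V^{F_i}$ iterates over $o_i \in O_i$ and $o_{-i} \in \ooo_{-i}$ while taking a maximum over $F_i^{-1}(o_i) \subseteq X_i \setminus O_i$; since $|O_i|\cdot|\ooo_{-i}| = |\ooo|$, one such call costs $\bigO(|\ooo|\,|X_i \setminus O_i|)$. Summing over players this contributes the summand $n\,|O_i|^{|X_i \setminus O_i|}\,|\ooo|\,|X_i \setminus O_i|$. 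The second loop (lines \eqref{algli:foreachF}--\eqref{algli:foreachFend}) ranges over $F \in \fff$, of which there are $\prod_{i \in N} |O_i|^{|X_i \setminus O_i|} \leq \max_{i \in N} |O_i|^{\,n|X_i \setminus O_i|}$, and each iteration evaluates $\delta^F = \max_{o \in \ooo}\sum_{i \in N} V^{F_i}(o)$ in $\bigO(n|\ooo|)$ time; this contributes the summand $|O_i|^{\,n|X_i \setminus O_i|}$. Factoring out the common $|\ooo|$ factor yields the stated bound $\bigO(|\ooo|\max_{i \in N}(n|O_i|^{|X_i \setminus O_i|}|\ooo||X_i \setminus O_i| + |O_i|^{n|X_i \setminus O_i|}))$. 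Note that the final infinity-assignment loop (lines \eqref{algli:inftypromstart}--\eqref{algli:inftypromend}) only serves to write out $\vvv$ and does not affect the \emph{identification} of $\delta$, so it need not be charged.

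I expect the main obstacle to be reconciling the two cost notions rather than the arithmetic of the time bound. The algorithm optimises $\max_{o \in \ooo}\sum_{i \in N} V_i(o)$, whereas the \pEHG{} cost is $\max_{x \in \xxx^{\star}_{G[\vvv]}}\sum_{i \in N} V_i(x)$; these agree precisely when $\xxx^{\star}_{G[\vvv]} = \ooo$. The crux is therefore to argue that, in an \equib{} game, restricting to exact implementations loses nothing — which is exactly \cref{theorem:false_promises} — and, more delicately, that the conversion preserves the cost \emph{exactly}: this holds because $\vvv^{\star}$ is constructed to agree with $\vvv$ on all profiles in $\ooo$, so the worst-case payment over $\ooo$ is unchanged. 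Once this equivalence is in place, the theorem follows by chaining \cref{lem_fix_correct} and \cref{theorem:false_promises} and reading off the loop counts.
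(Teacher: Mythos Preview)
Your proposal is correct and follows essentially the same approach as the paper: combine \cref{lem_fix_correct} (optimality of Algorithm~\ref{alg:eidennew} for the non-exact objective $\max_{o\in\ooo}\sum_i V_i(o)$) with \cref{theorem:false_promises} (cost-preserving conversion to an exact implementation in \equib{} games), then count the two dominant loops. Your correctness argument is in fact more explicit than the paper's---you spell out both inequalities and the point that $\cost(\vvv)=\max_{o\in\ooo}\sum_i V_i(o)$ precisely when $\xxx^{\star}_{G[\vvv]}=\ooo$---but the structure and the running-time accounting are the same.
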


\ifshort
\appendixproof{thm:eiden_fix}{
\else
\begin{proof}%
\fi
By \cref{lem_fix_correct} Algorithm~\ref{alg:eidennew} correctly returns the minimum $\delta \geq 0$ such that there is a payment promise $\vvv$ for which $\max_{o \in O}\sum_{i \in N}\vvv_i(o) \leq \delta$ and $\xxx^{\star}_{G[\vvv]} \subset \ooo$. By \cref{theorem:false_promises}, we can translate this to an exact implementation of~$\ooo$ with cost $\delta$.

For a given $F_i, X_i, O_i$, \ComputeV iterates over $\ooo$ in lines \eqref{algli:comv2}-\eqref{algli:comv3}. The $\max$-operation on line \eqref{algli:comv4} iterates at worst over $X_i \setminus O_i$. Thus \ComputeV takes $\bigO(|\ooo||X_i \setminus O_i|)$ time.

Line \eqref{algli:initcount1} iterates over every player $i \in N$ and Line \eqref{algli:initcount2} every function from $X_i \setminus O_i \to O_i$, which there are $| O_i |^{|X_i \setminus O_i|}$. Therefore lines \eqref{algli:initcount1}-\eqref{algli:initcount3} take $\bigO(\max_{i \in N}n| O_i |^{|X_i \setminus O_i|}|\ooo||X_i \setminus O_i|)$ time.

Lines \eqref{algli:foreachF}-\eqref{algli:foreachFend} iterate over $\fff$ and performs a computation over $\ooo$. The time-complexity is thus $\bigO(\max_{i \in N}|O_i |^{n|X_i \setminus  O_i|}|\ooo|)$.

The remaining algorithm iterates over $\xxx$. Thus the total time complexity is $\bigO(|\ooo|\max_{i \in N}(n| O_i |^{|X_i \setminus O_i|}|\ooo||X_i \setminus O_i| + |O_i |^{n|X_i \setminus  O_i|}))$.
\ifshort
}
\else
\end{proof}
\fi

\section{Characterization of Cost-0 Implementation}
\label{sec:char}
\appendixsection{sec:char}
\newcommand{\gennashlong}{promise-Nash equilibrium}
\newcommand{\gennash}{PNE}
\looseness=-1
In this section we characterize rectangular strategy profiles $\mathcal{P} = P_1 \times P_2 \times \ldots \times P_n$ that can be implemented at zero cost.
We call such profiles \gennashlong~(\gennash).
The naming comes from two considerations.
First, if each player~$i$ has only one strategy in~$P_i$, then a \gennash\ is equivalent to a Nash equilibrium.
Second, a \gennash\ encapsulates the notion that no player $i$ has an incentive to switch towards a strategy outside of $P_i$ \emph{provided} that each other player $j$ plays only strategies in $P_j$.
We believe this notion to be of independent interest because it models situations in which certain types of strategies may be off limits for, e.g., moral reasons.
Using \gennash\ may thus enable studying the price (or value) of morality and similar ideas.
Formally:

\begin{definition}
  Let $G = (N, \mathcal{X}, \mathcal{U})$ be a game.
  A rectangular strategy profile region $P_1 \times \dots \times P_n \subset X_1 \times \dots \times X_n$ is a \emph{\gennashlong~(\gennash)} if
  \begin{multline*}
    \forall~i \in N, \forall~x_i \in X_i \setminus P_i~\exists~p_i \in P_i \colon \\
    \forall~p_{-i} \in P_{-i} \colon \quad U_i(p_i, p_{-i}) \geq U_i(x_i, p_{-i}).
\end{multline*}
\end{definition}

We observe the following.

\ifshort\begin{theorem}[\appsymb]\else\begin{theorem}\fi\label{thm:zero-cost-is-gen-nash}
  Let $G = (N, \mathcal{X}, \mathcal{U})$ be a game.
  A rectangular strategy profile region $\mathcal{P} = P_1 \times \dots \times P_n \subset X_1 \times \dots \times X_n$ can be implemented with cost 0 if and only if $\mathcal{P}$ is a \gennashlong.
\end{theorem}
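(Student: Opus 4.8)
The plan is to prove the two implications separately, and in both the pivotal fact is that payment promises are non-negative while the cost is the worst-case \emph{total} payment over the undominated profiles. Hence a cost-$0$ implementation must pay nothing at every undominated profile: if $\mathcal{V}$ implements $\mathcal{P}$ with $\cost(\mathcal{V}) = 0$, then $\sum_{i \in N} V_i(x) \le 0$ for every $x \in \mathcal{X}^{\star}_{G[\mathcal{V}]}$, and since each $V_i \ge 0$ this forces $V_i(x) = 0$ for all $i$ and all undominated $x$. Thus on $\mathcal{X}^{\star}_{G[\mathcal{V}]}$ the modified utilities $[U_i + V_i]$ coincide with the original utilities $U_i$. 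This is the bridge that lets me move between domination in $G[\mathcal{V}]$ and the raw-utility inequalities $U_i(p_i, p_{-i}) \ge U_i(x_i, p_{-i})$ appearing in the definition of a \gennash.

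For the direction ``cost $0$ $\Rightarrow$ \gennash'', I would start from such a $\mathcal{V}$. Non-emptiness of each undominated set (\cref{obs:undom_noempty}) together with $\mathcal{X}^{\star}_{G[\mathcal{V}]} \subseteq \mathcal{P}$ yields $X_i^{\star} \subseteq P_i$ for every player $i$. Fix $i$ and $x_i \in X_i \setminus P_i$; then $x_i \notin X_i^{\star}$, so $x_i$ is dominated in $G[\mathcal{V}]$ and, by \cref{obs:undom_dom}, dominated by some undominated $p_i \in X_i^{\star} \subseteq P_i$. Evaluating the domination inequality $[U_i + V_i](p_i, p_{-i}) \ge [U_i + V_i](x_i, p_{-i})$ at opponent profiles $p_{-i} \in P_{-i}$, and using both that the payments vanish on the undominated profiles and that $V_i(x_i, p_{-i}) \ge 0$, collapses it to $U_i(p_i, p_{-i}) \ge U_i(x_i, p_{-i})$, which is exactly the \gennash\ condition for the pair $(i, x_i)$.

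For the converse ``\gennash\ $\Rightarrow$ cost $0$'', I would use the defining quantifier of a \gennash\ to choose, for every $i$ and every $x_i \in X_i \setminus P_i$, a witness $p_i \in P_i$ with $U_i(p_i, p_{-i}) \ge U_i(x_i, p_{-i})$ for all $p_{-i} \in P_{-i}$. Then define a non-negative payment promise that is identically $0$ on $\mathcal{P}$ and awards each player a large constant $M$ for playing any strategy of $P_i$ whenever the opponents deviate from $P_{-i}$ (and $0$ to the strategies outside $P_i$). For $M$ larger than the spread of all utility values, the witness $p_i$ dominates $x_i$ in $G[\mathcal{V}]$: the \gennash\ inequality handles the opponent profiles inside $P_{-i}$, while the reward $M$ makes $p_i$ strictly better on an opponent profile outside $P_{-i}$, supplying the strict inequality needed for (weak) domination. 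Consequently every out-of-$P_i$ strategy is dominated, so $\mathcal{X}^{\star}_{G[\mathcal{V}]} \subseteq \mathcal{P}$; and since the promise is $0$ throughout $\mathcal{P} \supseteq \mathcal{X}^{\star}_{G[\mathcal{V}]}$, the cost is $0$.

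I expect two delicate points. In the forward direction the \gennash\ inequality is demanded for \emph{every} $p_{-i} \in P_{-i}$, whereas the payments are only guaranteed to vanish on the undominated profiles; making the collapse to $U_i$ rigorous therefore hinges on the relevant opponent profiles being undominated, which is precisely the content of implementing $\mathcal{P}$ at zero cost and is cleanest when the implementation pins the undominated region down to $\mathcal{P}$. In the backward direction the main obstacle is securing the strict inequality required by weak domination: it relies on the existence of an opponent profile outside $P_{-i}$, i.e.\ on $P_{-i} \subsetneq \mathcal{X}_{-i}$, so the degenerate case where some player faces no forbidden opponent profile must either be handled separately or excluded by the standing assumption $\ooo_{-i} \subsetneq \xxx_{-i}$.
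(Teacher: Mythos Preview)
Your approach is essentially identical to the paper's. For the backward direction both you and the paper promise a large reward (you use a finite $M$, the paper uses $\infty$) to each player for playing inside $P_i$ whenever the opponents stray from $P_{-i}$, and pay $0$ otherwise; the \gennash\ inequality handles the opponent profiles inside $P_{-i}$ and the large reward supplies the strict inequality on a profile outside $P_{-i}$. For the forward direction both you and the paper pick, for each $x_i \notin P_i$, an undominated dominator $p_i \in P_i$ (via \cref{obs:undom_dom}), argue that cost $0$ forces the relevant payments to vanish, and read off $U_i(p_i, p_{-i}) \geq U_i(x_i, p_{-i})$.

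Two caveats are worth noting. First, you premise everything on non-negative payment promises, but the paper's formal definition allows $V_i \colon \mathcal{X} \to \mathbb{R}$; that said, the paper's own proof tacitly uses non-negativity at exactly the step you do (concluding $V_i(x'_i, p_{-i}) = 0$ from $\cost(\mathcal{V}) \leq 0$, and then silently dropping the term $V_i(x_i, p_{-i})$ on the right-hand side). Second, the delicate point you raise in the forward direction is genuine and is glossed over by the paper as well: the cost bound only controls payments on $\mathcal{X}^{\star}_{G[\mathcal{V}]}$, whereas the \gennash\ condition is required for \emph{every} $p_{-i} \in P_{-i}$, and an implementation only guarantees $\mathcal{X}^{\star}_{G[\mathcal{V}]} \subseteq \mathcal{P}$, not equality. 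Your backward-direction caveat about needing $P_{-i} \subsetneq \mathcal{X}_{-i}$ to obtain the strict inequality is likewise present (and likewise unaddressed) in the paper's argument. In short, your sketch mirrors the paper faithfully, including the places where the paper's own argument is not fully airtight.
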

In fact, this theorem has an interpretation in the morality setting mentioned above:
Let a profile $\mathcal{P}$ consist only of moral strategies and no amoral ones.
Then $\mathcal{P}$ is a \gennash\ if and only if morality is incentivized without any incentives having to actually be realized.
In other words, morality is self-enforcing if and only if it constitutes a \gennash.

\ifshort
\appendixproof{thm:zero-cost-is-gen-nash}{
\else
\begin{proof}[Proof of \cref{thm:zero-cost-is-gen-nash}]
\fi
  Assume first that $\mathcal{P}$ can be implemented at cost~0.
  Let $\mathcal{V}$ be a corresponding payment promise.
  We aim to show that $\mathcal{P}$ is a \gennash.
  Consider a player $i \in N$ and a strategy $x_i \in X_i \setminus P_i$.
  Since $\mathcal{V}$ implements $\mathcal{P}$, there is a strategy $x'_i$ that dominates $x_i$ in $G[\mathcal{V}]$.
  Furthermore, by transitivity of domination we may assume that $x'_i \in P_i$.
  Since the cost of $\mathcal{V}$ is 0 we have for each $p_{-i} \in \mathcal{P}_{-i}$ that $V_i(x'_i, p_{-i}) = 0$.
  However, since $x'_i$ dominates $x_i$ we have $[U_i + V_i](x'_i, p_{-i}) \geq [U_i + V_i](x_i, p_{-i})$.
  Thus, necessarily $U_i(x'_i, p_{-i}) \geq U_i(x_i, p_{-i})$.
  Thus, we may take $x'_i$ to be the strategy $p_i$ promised to exist by the definition of \gennash s.
  Thus, $\mathcal{P}$ indeed is a \gennash.

  Now assume that $\mathcal{P}$ is a \gennash.
  We define the following payment promise $\mathcal{V}$.
  For each $i \in N$, each $p_i \in P_i$ and each $x_{-i} \in \mathcal{X}_{-i} \setminus \mathcal{P}_{-i}$ we put $V_i(p_i, x_{-i}) = \infty$.
  All other values of $V_i$ are~0.
  Observe that all non-zero payment promises are for strategy profiles that include at least one strategy not in some $P_j$.
  Thus, if we can show that for each player $j \in N$ it is the case that each strategy $x_j \in X_j \setminus P_j$ is not undominated, then the cost of $\mathcal{V}$ is 0.
  
  Consider a player $i \in N$ and a strategy $x_i \in X_i \setminus P_i$.
  We claim that $x_i$ is not undominated.
  By the definition of a \gennash, there exists a strategy $p_i \in P_i$ such that $\forall~p_{-i} \in P_{-i} \colon U_i(p_i, p_{-i}) \geq U_i(x_i, p_{-i})$.
  By definition of $\mathcal{V}$ we thus have $[U_i + V_i](p_i, p_{-i}) \geq [U_i + V_i](x_i, p_{-i})$.
  Furthermore, by definition of $\mathcal{V}$ we have for each $x_{-i} \in \mathcal{X}_{-i} \setminus \mathcal{P}_{-i}$ that $[U_i + V_i](p_i, x_{-i}) = \infty > [U_i + V_i](x_i, x_{-i})$.
  Thus indeed $p_i$ dominates $x_i$.

  It follows that indeed the cost of $\mathcal{V}$ is 0.
  Furthermore, $\mathcal{V}$ implements $\mathcal{P}$ by the domination relations shown above.
\ifshort
} %
\else
\end{proof}
\fi

\section*{Acknowledgments}
Jiehua Chen and Sofia Simola are supported by the Vienna Science and Technology Fund (WWTF), grant number VRG18-012.
Manuel Sorge acknowledges funding by the Alexander von Humboldt Foundation.

\bibliography{literature}

\begin{thebibliography}{18}
\providecommand{\natexlab}[1]{#1}

\bibitem[{Brill, Freeman, and Conitzer(2015)}]{brill_computing_2015}
Brill, M.; Freeman, R.; and Conitzer, V. 2015.
\newblock Computing the Optimal Game.
\newblock In \emph{Proceedings of the 2nd Workshop on Exploring Beyond the
  Worst Case in Computational Social Choice}, 1--8.

\bibitem[{Buterin et~al.(2020)Buterin, Reijsbergen, Leonardos, and
  Piliouras}]{ButerinRLP20}
Buterin, V.; Reijsbergen, D.; Leonardos, S.; and Piliouras, G. 2020.
\newblock Incentives in {E}thereum's hybrid Casper protocol.
\newblock \emph{International Journal of Network Management}, 30(5).

\bibitem[{Conitzer and Sandholm(2014)}]{conitzer_complexity_2000}
Conitzer, V.; and Sandholm, T.~W. 2014.
\newblock {Complexity of Mechanism Design}.
\newblock Technical report, arXiv.

\bibitem[{Deng and Conitzer(2017)}]{DengC17}
Deng, Y.; and Conitzer, V. 2017.
\newblock Disarmament Games.
\newblock In Singh, S.; and Markovitch, S., eds., \emph{Proceedings of the
  Thirty-First {AAAI} Conference on Artificial Intelligence, February 4-9,
  2017, San Francisco, California, {USA}}, 473--479.

\bibitem[{Deng and Conitzer(2018)}]{DengC18}
Deng, Y.; and Conitzer, V. 2018.
\newblock Disarmament Games With Resource.
\newblock In McIlraith, S.~A.; and Weinberger, K.~Q., eds., \emph{Proceedings
  of the Thirty-Second {AAAI} Conference on Artificial Intelligence (AAAI
  2018)}, 981--988. {AAAI} Press.

\bibitem[{Deng, Tang, and Zheng(2016)}]{deng_complexity_2016}
Deng, Y.; Tang, P.; and Zheng, S. 2016.
\newblock Complexity and Algorithms of {K}-implementation.
\newblock In \emph{Proceedings of the 15th International Conference on
  Autonomous Agents and Multiagent Systems (AAMAS '16)}, 9.

\bibitem[{Eidenbenz et~al.(2011)Eidenbenz, Pignolet, Schmid, and
  Wattenhofer}]{eidenbenz_cost_2011}
Eidenbenz, R.; Pignolet, Y.~A.; Schmid, S.; and Wattenhofer, R. 2011.
\newblock Cost and complexity of harnessing games with payments.
\newblock \emph{International Game Theory Review}, 13(01): 13--44.

\bibitem[{Garey and Johnson(1979)}]{GJ79}
Garey, M.~R.; and Johnson, D.~S. 1979.
\newblock \emph{Computers and Intractability: {A} Guide to the Theory of
  {NP}-Completeness}.
\newblock W. H. Freeman.

\bibitem[{Hamari(2019)}]{hamari_gamification_2019}
Hamari, J. 2019.
\newblock \emph{Gamification}, 1--3.
\newblock John Wiley \& Sons, Ltd.

\bibitem[{Kearns, Littman, and Singh(2001)}]{KearnsLS01}
Kearns, M.~J.; Littman, M.~L.; and Singh, S. 2001.
\newblock Graphical Models for Game Theory.
\newblock In Breese, J.~S.; and Koller, D., eds., \emph{Proceedings of the 17th
  Conference in Uncertainty in Artificial Intelligence ({UAI} '01)}, 253--260.
  Morgan Kaufmann.

\bibitem[{Letchford and Conitzer(2010)}]{letchford_computing_2010}
Letchford, J.; and Conitzer, V. 2010.
\newblock Computing optimal strategies to commit to in extensive-form games.
\newblock In \emph{Proceedings of the 11th ACM conference on Electronic
  commerce (EC '10)}, 83--92. Association for Computing Machinery.

\bibitem[{Maskin(1999)}]{maskin_nash_1999}
Maskin, E. 1999.
\newblock Nash Equilibrium and Welfare Optimality.
\newblock \emph{Review of Economic Studies}, 66(1): 23--38.

\bibitem[{Maskin and Sjöström(2002)}]{maskin_implementation_2002}
Maskin, E.; and Sjöström, T. 2002.
\newblock \emph{Implementation Theory}, volume~1 of \emph{Handbook of Social
  Choice and Welfare}, 237--288.
\newblock Elsevier.

\bibitem[{Monderer and Tennenholtz(2004)}]{monderer_kimplementation_2004}
Monderer, D.; and Tennenholtz, M. 2004.
\newblock K-{I}mplementation.
\newblock \emph{Journal of Artificial Intelligence Research}, 21: 37--62.

\bibitem[{Moscibroda and Schmid(2009)}]{MoscibrodaS09}
Moscibroda, T.; and Schmid, S. 2009.
\newblock On Mechanism Design without Payments for Throughput Maximization.
\newblock In \emph{Proceedings of the 28th {IEEE} International Conference on
  Computer Communications ({INFOCOM} 2009)}, 972--980.

\bibitem[{Schaefer(1978)}]{schaefer1978complexity}
Schaefer, T.~J. 1978.
\newblock The Complexity of Satisfiability Problems.
\newblock In \emph{Proceedings of the Tenth Annual {{ACM}} Symposium on
  {{Theory}} of Computing (STOC '78)}, 216--226.

\bibitem[{Thaler and Sunstein(2008)}]{thaler_nudge_2008}
Thaler, R.~H.; and Sunstein, C.~R. 2008.
\newblock \emph{Nudge: Improving Decisions about Health, Wealth, and
  Happiness}.
\newblock Yale University Press.

\bibitem[{Wooldridge et~al.(2013)Wooldridge, Endriss, Kraus, and
  Lang}]{wooldridge_incentive_2013}
Wooldridge, M.; Endriss, U.; Kraus, S.; and Lang, J. 2013.
\newblock Incentive engineering for Boolean games.
\newblock \emph{Artificial Intelligence}, 195: 418--439.

\end{thebibliography}

\end{document}

